\newtheorem{theorem}{Theorem}
\newtheorem{corollary}{Corollary}
\newtheorem{lemma}{Lemma}
\newtheorem{proposition}{Proposition}
\newtheorem{definition}{Definition}
\newtheorem{remark}{Remark}
\newtheorem{example}{Example}
\def\ps@IEEEtitlepagestyle{%
  \def\@oddfoot{\mycopyrightnotice}%
  \def\@evenfoot{}%
}
\def\mycopyrightnotice{%
  \begin{minipage}{\textwidth}
  \centering \scriptsize
  Copyright~\copyright~2021 IEEE. Personal use of this material is permitted. Permission from IEEE must be obtained for all other uses, in any current or future media, including reprinting/republishing this material for advertising or promotional purposes,creating new collective works, for resale or redistribution to servers or lists, or reuse of any copyrighted component of this work in other works. 
  \end{minipage}
}
\begin{document}
%
\title{Duplicity Games for Deception Design with an Application to Insider Threat Mitigation}
%
%
%

\author{Linan~Huang,~\IEEEmembership{Student Member,~IEEE,}
        and~Quanyan~Zhu,~\IEEEmembership{Member,~IEEE}
\thanks{This  paper  has  been  accepted  for  publication  in  IEEE Transactions on Information Forensics and Security}
\thanks{This work is partially supported by grants SES-1541164, ECCS-1847056, CNS-2027884, and BCS-2122060 from National Science Foundation (NSF), DOE-NE grant 20-19829 and grant W911NF-19-1-0041 from Army Research Office (ARO).
}
\thanks{L. Huang and Q. Zhu are with the Department
of Electrical and Computer Engineering, New York University,
Brooklyn, NY, 11201, USA. E-mail:\{lh2328,qz494\}@nyu.edu}
\thanks{Digital Object Identifier 10.1109/TIFS.2021.3118886}
}

\maketitle
\begin{abstract}
Recent incidents such as the Colonial Pipeline ransomware attack and the SolarWinds hack have shown that traditional defense techniques are becoming insufficient to deter adversaries of growing sophistication. 
Proactive and deceptive defenses are an emerging class of methods to defend against zero-day and advanced attacks. This work develops a new game-theoretic framework called the duplicity game to design deception mechanisms that consist of a generator, an incentive modulator, and a trust manipulator, referred to as the GMM mechanism. 
We formulate a mathematical programming problem to compute the optimal GMM mechanism, quantify the upper limit of enforceable security policies, and characterize conditions on user's identifiability and manageability for cyber attribution and user management. 
We develop a separation principle that decouples the design of the modulator from the GMM mechanism and an equivalence principle that turns the joint design of the generator and the manipulator into the single design of the manipulator. 
A case study of dynamic honeypot configurations is presented to mitigate insider threats. The numerical experiments corroborate the results that the optimal GMM mechanism can elicit desirable actions from both selfish and adversarial insiders and consequently improve the security posture of the insider network. In particular, a proper modulator can reduce the \textcolor{black}{incentive misalignment} between the players and achieve win-win situations for the selfish insider and the defender. 
Meanwhile, we observe that the defender always benefits from faking the percentage of honeypots when the optimal generator is presented. 
\end{abstract}
%
\begin{IEEEkeywords}
Bayesian persuasion, proactive defense,  mechanism design, insider threat, cyber deception, cyber attribution, cyber trust, incentive mechanism 
\end{IEEEkeywords}

%
\IEEEpeerreviewmaketitle

%
%
%
%


\section{Introduction}
\IEEEPARstart{C}{yber} deception is an emerging proactive defense technique against increasingly sophisticated attacks, including Advanced Persistent Threats (APTs), insider threats, and supply chain attacks. Defensive deception technologies, such as Moving Target Defense (MTD) \cite{jajodia2011moving} and honeypots \cite{bringer2012survey}, create uncertainties and misinformation for adversaries to misdirect their perception and decision processes \cite{al2019autonomous}. 
\textcolor{black}{
Among many success stories from industry, it has been shown in  \cite{instance1290} that deception technology has successfully reduced the data breach costs by 51.4\% and per analyst costs by 32\% for the Security Operations Center (SOC). 
}
An important application of cyber deception is to defend systems from insider threats. Harmful behaviors of \textcolor{black}{inadvertent insiders or insiders with malicious intentions} can lead to compromises of sensitive data and disruptions in the organization's normal operations \cite{Mitigation}. 
Deception technologies provide promising proactive solutions to detect unwarranted behaviors and deter the insiders from  wrongdoing, e.g., \cite{spitzner2003honeypots}. 

The design of successful defensive deception relies on a formal approach that quantifies the strategic interactions of the three classes of players, including a defender, users, and adversaries. A useful framework to design cyber deception mechanisms needs to capture three main features.
First, the defender, the users, and the adversaries are strategic players with clear but imperfectly aligned objectives or incentives. Second, the defender cannot distinguish adversaries from the normal users. For example, the defender does not know who is an adversarial insider when designing a security policy for the network. Apart from this, the defender cannot distinguish the type of users in the network concerning their objectives, resources, and trust values. 
Third, a sophisticated adversary behaves stealthily and intelligently, e.g., by conducting successful reconnaissance or acting like a normal user to gain access or trust.  

In this work, we propose \textit{Duplicity Games} (DG) as a mechanism design framework for defensive deception to elicit desirable security outcomes when a defender, normal users, and adversaries interact to attain their individual objectives.
A DG is a two-stage game between a defender and a normal/adversarial user with two-sided asymmetric information. 
The defender, or the defensive deceiver, has private information of the system state. 
The user has a private type, which characterizes the user's objectives, trustworthiness, and attributes, e.g., normal or adversarial. 
At the first stage of the game, the defender designs three composable components of the mechanism, i.e., a \emph{generator}, an \emph{incentive modulator}, and a \emph{trust manipulator}. 
The generator is a mechanism that stochastically generates signals or security policies based on the system's private information and system constraints. 
The modulator reshapes the user's incentive by creating constrained utility transfers between two players. 
The manipulator distorts the user's prior belief over the unknowns. These three components are together referred to as the GMM mechanism. 
After the mechanism is designed and implemented, the user observes the security policies, updates his trust through the Bayesian rule, and then responds to the GMM mechanism by taking an action that serves his objective.  
The optimal design of the GMM mechanisms anticipates the behaviors of different types of users under a given set of security policies and elicits desirable security behaviors.
The GMM mechanisms we introduce here represent a class of multi-dimensional security mechanisms that control the security policies, the (dis)incentives, and the digital footprints (e.g., feature patterns and configurations of honeypots and normal servers). 


We formulate the design problem into a mathematical programming problem, where the anticipated behavioral outcomes of the users follow \textcolor{black}{the Incentive-Compatible (IC) constraint and the Modulation-Feasible (MF) constraint.}
We use concavification techniques as in \cite{aumann1995repeated, kamenica2011bayesian} to provide a graphical analysis and interpretation of  the GMM mechanism. 
We observe that  the user’s \textit{expected posterior utility} can be fully characterized by Piece-Wise Linear and Convex (PWLC) functions. 
This observation leads to a \textcolor{black}{significantly} reduced number of enforceable security policies and enables an efficient implementation of the GMM mechanism. 
Finally, we show a fundamental \textit{separation principle} in which the defender can design the modulator independently, and an  \textit{equivalence principle} where the joint design of the generator and the manipulator is equivalent to the single design of the manipulator.  

For further elaboration, we use the DG framework to study insider threats and design mitigation strategies to deter and prevent misbehavior in corporate networks. 
The corporate network defender can adaptively configure honeypots and normal servers to counter fingerprinting (i.e., generator), modify \textcolor{black}{the complexity} of the authentication process to change user's incentives (i.e., modulator), and misreport the percentage of honeypots to make use of the user's trust (i.e., manipulator). The design of the GMM mechanism  leads to a set of multi-faceted socio-technical solutions for insider threats, which formalizes the management guidelines for insider threats recommended in \cite{Mitigation}.   
From the generator design, 
we propose the concept of the \textit{motive threshold} to assess the average motive of the entire insider population and the concept of the \textit{deterrence threshold} to measure the adequacy of the honeypots. 
From the modulator design, we illustrate how the proper design of the authentication cost can \textcolor{black}{reduce the misalignment between the insiders' and the defender's incentives.} 
From the manipulator design, we find that the manipulation of the insiders' initial beliefs can harm the defender when there are no deceptive generators, but create an advantage when the optimal generator is applied. 

\subsection{Related Work}
\subsubsection{\textcolor{black}{Game Theory for Cyber Deception}}
\textcolor{black}{
Game theory has been widely applied for proactive defense and cyber deception to enhance the security of cyber-physical systems \cite{zhao2020finite,manshaei2013game,huang2017large,pawlick2021game}. 
Games of incomplete information provide a natural paradigm to quantify the uncertainty and misinformation induced by the deception. 
Exemplary game models include signaling games \cite{pawlick2018modeling,9447822}, 
dynamic Bayesian games with finite \cite{HUANG2020101660} or infinite states \cite{huang9494340},}  
(Bayesian) Stackelberg security games  \cite{feng2017signaling,cranford2018learning,xu2016signaling}, and partially observable stochastic games \cite{horak2017manipulating}. 
These incomplete-information games focus on finding signals and behaviors at the equilibrium for a given mechanism and information structure. 
In this work, we further aim to design the mechanism and exploit the information asymmetry, which proactively enhances cyber security. 

\subsubsection{\textcolor{black}{Incentive Mechanisms and Information Design in Cyber Systems}}
\textcolor{black}{There is rich literature on incentive mechanisms designed to enhance security \cite{7539363,7265043}, efficiency \cite{8355788,7180387}, privacy \cite{8355763} of cyber-physical systems. 
They are applied to wide applications, including crowdsourcing \cite{8355788,8355763}, mobile sensing \cite{7265043}, cloud computing \cite{7180387}, cyber insurance \cite{8913631}, and security as service \cite{7539363,chen2017security}.}  
These incentive mechanisms mainly focus on designing the payoff rules and the allocation rules to incentivize participants' behaviors in the designer's favor. 
\textcolor{black}{Besides incentive design, previous works have also investigated information design by disclosing information strategically, which has been applied to wildlife protection \cite{rabinovich2015information}, congestion mitigation \cite{das2017reducing}, and honeypot configuration \cite{HORAK2019101579}.} 
DGs broaden the scope \textcolor{black}{of these two classes of mechanisms} to the joint design of \textcolor{black}{information, incentive, and trust} to achieve desirable equilibrium outcomes. 


\subsubsection{\textcolor{black}{Insider Threat Mitigation and Incentive Design}}
Previous works, e.g.,  \cite{Mitigation,moore2015effective}, have proposed guidelines to establish effective inside threat mitigation programs. 
\textcolor{black}{
Game-theoretic models have been developed to detect insider threats \cite{KANTZAVELOU2010859} and identify the best response strategy \cite{LIU200875}. 
The recent work \cite{9218982} has incorporated organizational culture and the existing defensive mechanisms into the game model. 
These works provide a quantitative understanding of insider threats but overlook the human aspects, such as compliance and incentives, which are fundamental and challenging problems for insider threat mitigation. 
The authors in \cite{casey2015compliance} have used signaling games to model compliant and non-compliant insiders and adopted a feedback loop to control their compliance.} 
This work uses honeypots as a way to detect and monitor the misbehavior of the insiders and aims to formalize the design of such guidelines, e.g., detection, incentives, and penalties. 

\subsection{Notations and Organization of the Paper}
Calligraphic letter $\mathcal{A}$ defines a  set. 
The notation $\Delta \mathcal{A}$ represents the set of probability distribution over $\mathcal{A}$ and $|\mathcal{A}|$ represents its cardinality. 
\textcolor{black}{We summarize main notations for the general model and the case study
in Table \ref{table:notationGeneral} and  Table \ref{table:notationCasestudy}, respectively.} 
The rest of the paper is organized as follows. Section \ref{sec:DG Model} introduces the DG model. We present the mathematical programming and the concavification method in Section \ref{sec:MathProg} and \ref{sec:Concavification}, respectively. 
Section \ref{sec:case study} presents a case study of honeypot configuration to mitigate insider threats and Section \ref{sec:conclusion} concludes the paper.

\begin{table}[t]
\centering
\caption{Summary of notations for DG-GMM. 
\label{table:notationGeneral}}
\textcolor{black}{
\begin{tabularx}{\columnwidth}{l X} 
 \hline
 \textbf{Variable}      & \textbf{Meaning}      \\ \hline
 $N,M,K$   & Number of states, types, and actions.    \\ 
  $b(\cdot)\in \Delta \mathcal{X}$      & True probability distribution of the  state.  \\
 $b_U(\cdot|\theta)\in \Delta \mathcal{X}$      & User's initial belief of the state under $\theta$.  \\
$b_D(\cdot|x)\in \Delta \Theta$      & Defender's initial belief of the type at $x$.    \\ 
$\mathbf{p}^0:=[p^0_1,\cdots,p^0_N]$       & Common prior belief in vector form.  \\
$\mathbf{p}:=[p_1, \cdots, p_N]$     & Common posterior belief in vector form.     \\
$a^*_{\theta}(b_U^{\pi})$ or $a_{\theta}^*(\mathbf{p})$    & Optimal response action of a type-$\theta$ user to maximize his expected posterior utility.  \\
$\bar{v}_D(\pi,\mathbf{p}^0)$ & Defender's \textit{expected posterior utility} under generator $\pi$ and common prior belief $\mathbf{p}^0$.\\ 
$\tilde{v}_D(\mathbf{p}^0)=\bar{v}_D(\pi^0,\mathbf{p}^0)$ & Defender's \textit{prior utility} where generator $\pi^0$ contains zero information. \\
$V_D(\mathbf{p}^0)=\bar{v}_D(\pi^*,\mathbf{p}^0)$ & Defender's \textit{optimal posterior utility} where generator $\pi^*$ is optimal. \\
  $s_{\{a^1,a^2,\cdots,a^M\}}$       & Security policy that requires the user of type $\theta_l\in \Theta$ to take action $a^l\in \mathcal{A}$ for all $l\in \{1,2,\cdots,M\}$. \\
\hline
\end{tabularx}
}
\end{table}

\begin{table}[t]
\centering
\caption{Summary of notations in the case study. 
\label{table:notationCasestudy}}
\textcolor{black}{
\begin{tabularx}{\columnwidth}{l X} 
 \hline
\textbf{Variable} & \textbf{Meaning} \\ \hline
$p_D^{0,H}$       & Defender's prior belief of a node being a honeypot.\\
$p_U^{0,H}$       & Insider's prior belief of a node being a honeypot. \\
$p_U^H$         & Insider's posterior belief of a node being a honeypot. \\
 $q^g$ & Percentage of  selfish insiders. \\ 
  $q^b$ & Percentage of adversarial insiders. \\ 
  $r_U \phi^0$ & Insider’s authentication cost. \\
$t^g(\phi^0)$  & Decision thresholds of the selfish insiders. \\
$t^b(\phi^0)$  & Decision thresholds of the adversarial insiders. \\
\hline
\end{tabularx}
}
\end{table}

\section{Duplicity Game Model}
\label{sec:DG Model}
\textcolor{black}{We present a motivating example of insider threat mitigation in Section \ref{sec:motivatingexample}.} 
Then, we present the structure of DG in Section \ref{sec:gameElements} and the timeline of the GMM mechanism design in Section \ref{sec:timeline}, respectively. 
\textcolor{black}{Finally, we illustrate the relation of the DG-GMM mechanism to the Bayesian persuasion framework in Section \ref{sec:relation2BP}.}


 
\textcolor{black}{ 
\subsection{Motivating Example of Insider Threat Mitigation}
\label{sec:motivatingexample}
Insider threats have been a long-standing problem in cybersecurity. 
Due to their information, privilege, and resource advantages over external attackers, insider threats can circumvent classical defense techniques such as intrusion prevention and detection systems. 
As a result, defensive deception methods, such as honeypots, have been used for insider threat detection and mitigation (see e.g., \cite{spitzner2003honeypots,yamin2019implementation}).}  
Theoretically, honeypots are assumed to achieve a zero false-positive rate and low false-negative rate by generating decoys accessed only by attackers. 
This assumption may not hold for insider threats. 
On the one hand, non-adversarial insiders who are curious or error-prone can access honeypots, \textcolor{black}{which intensifies alert fatigue}. 
On the other hand, adversarial insiders can access the internal information and fingerprint honeypots \cite{dahbul2017enhancing,morishita2019detect} using features such as open ports, protocols, and error responses. 
\textcolor{black}{
To address these two challenges, we need to configure the honeypot and the normal servers strategically. 
The configuration needs to elicit desirable behaviors from both adversarial and non-adversarial insiders even though they have the same insider information. 
This work introduces three configuration methods that can be used independently or jointly; i.e., configure the feature pattern adaptively (see Example \ref{example:featurepattern} for details), prolong or shorten the authentication time to change insiders' incentives, and misreport the percentage of honeypots to make use of the insiders' trust.}

\subsubsection{\textcolor{black}{Categorization of Insiders' Motives}}
\label{sec:Categorization}
An insider's motive can be roughly classified into seven subcategories based on the VERIS Community Database (VCDB) \cite{WinNT3}. 
We divide these subcategories of motives into three classes of motives: selfish, adversarial, and unintentional. They make up $12\%$, $26\%$, and $62\%$, respectively. 
The class of selfish motives includes fun, convenience, fear, or ideology. The adversarial  motives include espionage, financial gain, or grudge. 
The category of unintentional motives refers to the negligent insiders who take no notice of the deceptive configuration and make habitual decisions. 
\textcolor{black}{The incentives of unintentional insiders are often uncontrollable through incentives. Our incentive design mechanism here focuses on the class of the selfish insiders, who seek self-interest, and the adversarial ones, who seek to sabotage the organization.} 

\subsubsection{\textcolor{black}{Corporate Network with Insiders and Honeypots}}
 Fig. \ref{fig:scenarioDiag} illustrates a corporate network with honeypots (denoted by $x^H$) and normal servers (denoted by $x^N$) as nodes. 
  \begin{figure}[h]
\centering
\includegraphics[width=0.95\columnwidth]{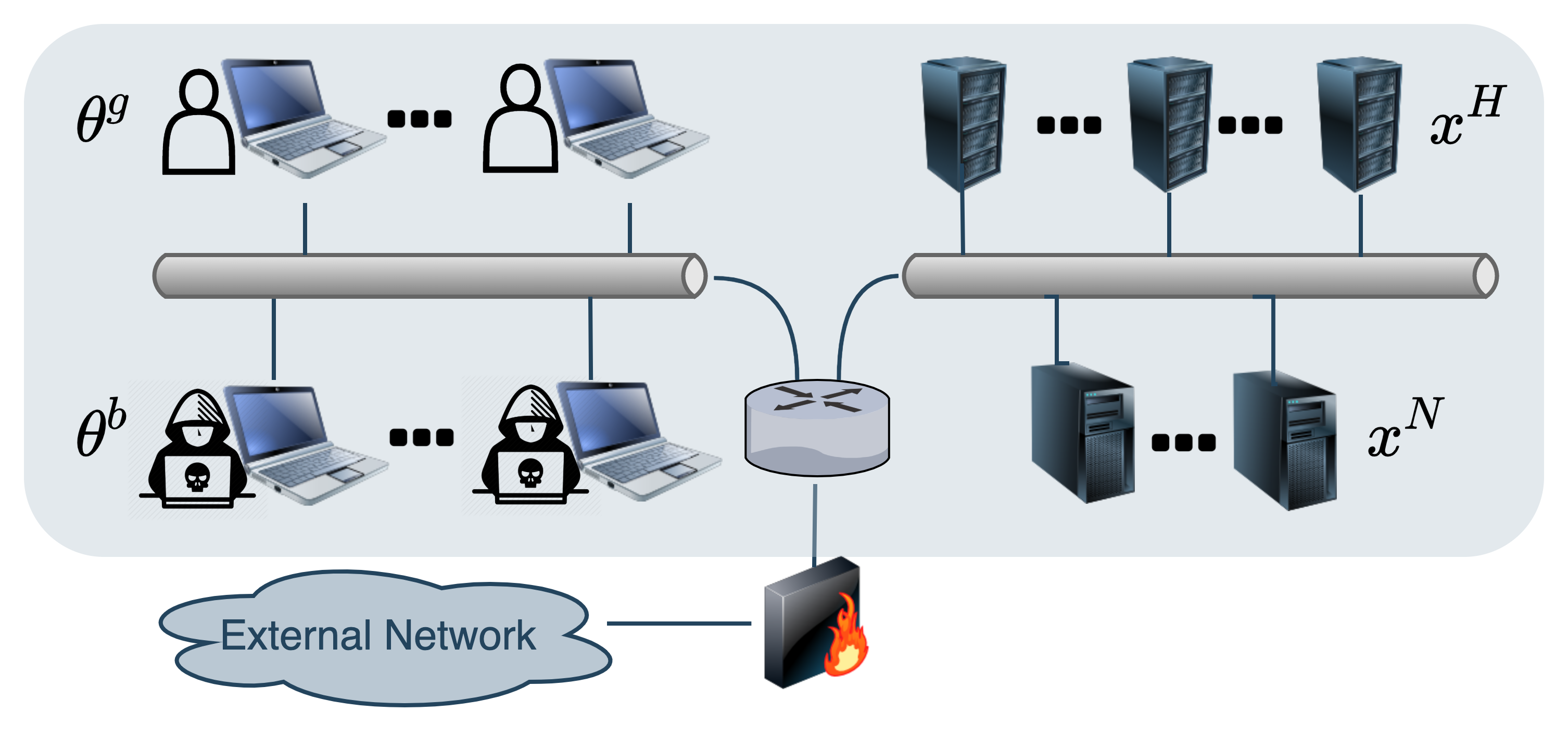}
\caption{ 
An example corporate network consists of normal servers and  honeypots. The light blue background shows the \textcolor{black}{region} of the internal network.  
}
\label{fig:scenarioDiag}
\end{figure}  
\textcolor{black}{
The SOC, or the defender, can privately determine the percentage, the location, and the configuration of honeypots in the corporate network. 
The goal of the defender is to elicit desirable behaviors from the selfish insiders (denoted by $\theta^g$) and the adversarial insiders (denoted by $\theta^b$).} 
Both types of insiders can take harmful actions intentionally yet for different reasons or motives. For example, selfish insiders may violate security rules and abuse their privileges to save time and effort in finishing their tasks. They do not seek to sabotage the organization as the adversarial ones do. 
For each node in the corporate network, an insider can either access it (denoted by action $a_{AC}$) or not (denoted by action $a_{DO}$).

  \begin{figure*}[th]
\centering
\includegraphics[width=0.9\textwidth]{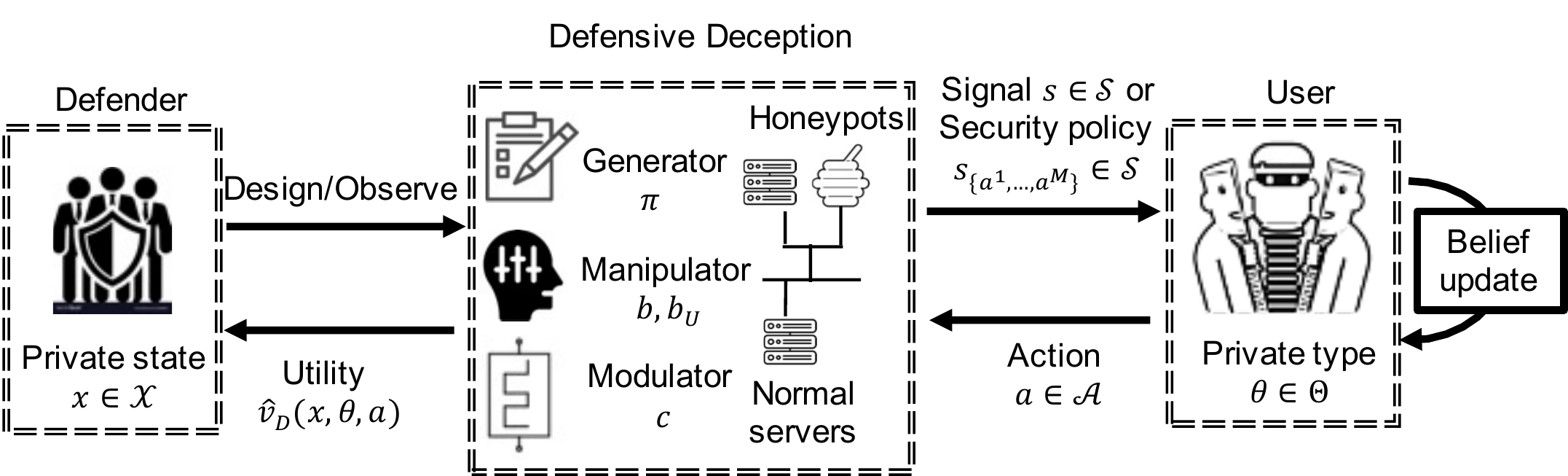}
\caption{
Timeline for the GMM mechanism design. 
}
\label{fig:honeycase}
\end{figure*}  
\subsection{Game Elements}
\label{sec:gameElements}
The DG consists of four elements; i.e., the \textit{basic game} $(\mathcal{X}, \Theta,\mathcal{A}, v_D,v_U, b\in \Delta \mathcal{X})$, the \textit{belief statistics} $(b_D(\cdot|x)\in \Delta \Theta,b_U(\cdot|\theta)\in \Delta \mathcal{X})$, the \textit{information structure} $(\mathcal{S},\pi\in \Pi)$, and the \textit{utility transfer} $(\gamma,c\in \mathcal{C})$. 

\subsubsection{Basic Game}
\label{sec:basic games}
The DG consists of two players $i\in \{D,U\}$, a defender $i=D$ (hereafter she) and a user $i=U$ (hereafter he). 
Define the finite sets of $N$ states, $M$ types, and $K$ actions as $\mathcal{X}:=\{x_1, \cdots,x_N \}$, $\Theta:=\{\theta_1, \cdots,\theta_M \}$, and $\mathcal{A}:=\{a_{DO},a_1, \cdots,a_{K-1}\}$, respectively. 
Action $a_{DO} \in \mathcal{A}$ is the drop-out action. It indicates that the user chooses not to participate in the game and takes no action. 

The game has two-sided asymmetric information. 
The defender can privately observe or know the realization of the state $x\in \mathcal{X}$ from a probability distribution $b\in \Delta \mathcal{X}$.  
For example, in \textcolor{black}{the corporate network in Fig. \ref{fig:scenarioDiag}, $b(x^H)$ and $b(x^N)$ represent the percentages of honeypots and normal servers, respectively.} 
The user does not know each node's state, i.e., whether a honeypot or a normal server. 
The user has a private type $\theta \in \Theta$ that represents his motive, capacity, rationality, or risk perception. 
The user's behaviors are abstracted as an action $a\in \mathcal{A}$. 
The defender can observe the user's action by monitoring and logging but she cannot observe the user's type; e.g., whether the user accesses the confidential data by accident (i.e., the unintentional type), out of self-interest (i.e., the selfish type), or for adversarial purposes (i.e., the adversarial type).  
 The utility functions of the defender and the user, denoted by 
 $v_i: \mathcal{X}\times \Theta \times\mathcal{A} \mapsto \mathbb{R}, i\in \{D,U\}$, depend on the state, type, and action. 
 
\subsubsection{Belief Statistics}
\label{sec:belief statistics}
The user's initial belief of the state under type $\theta\in \Theta$ is $b_U(\cdot|\theta)\in \Delta \mathcal{X}$. 
Since the user does not know the true state distribution $b(\cdot)$, his perceived state distribution $b_U$ can be different from the true one. 
The defender's belief of the user's type at state $x\in \mathcal{X}$ is $b_D(\cdot|x)\in \Delta \Theta$. 
In the game, the defender can design $b$ and $b_U$ through a virtual \textit{trust manipulator}.  
For example, the defender can determine the percentage of honeypots to be 
\textcolor{black}{$b(x^H)$ but report the percentage as $b_U(x^H|\theta)$ to the type-$\theta$} users who determine the percentage of honeypots based on the report without additional information. 
The trust manipulator is \textit{overt} if the user's perceived state distribution equals the true one for all types, i.e., $b_U(x|\theta)=b(x), \forall x\in \mathcal{X}, \forall \theta\in \Theta$. 
Otherwise, the trust manipulator is said to be \textit{covert} as the defender  stealthily manipulates users' initial beliefs. 

\subsubsection{Information Structure}
\label{sec:Information Structure}
 The \textit{information structure} consists of a finite set of signals $\mathcal{S}$ and a generator $\pi\in \Pi: \mathcal{X}\mapsto \Delta  \mathcal{S}$. 
 With a slight abuse of notation, we use $\pi(s|x)$ to represent the probability of signal $s\in \mathcal{S}$ at state $x\in \mathcal{X}$. 
\textcolor{black}{In Example \ref{example:featurepattern} below}, the signal can be interpreted as the \textit{feature patterns},  including protocols, ports, the response time, and the error response. 

\begin{example}[\textcolor{black}{\textbf{Dynamic Feature Pattern Configurations}}]
\label{example:featurepattern}
To defend against honeypot fingerprinting, dynamic \cite{shi2019dynamic} 
and adaptive \cite{wagener2011adaptive,huang2019adaptive} 
configurations have been adopted in honeypots. 
The SOC can also configure normal servers and disguise them as honeypots by generating honeypot-related features \cite{rowe2007defending}. 

Suppose that there are $J$ features that both honeypots and normal servers can generate.  
Denote the value of feature $j\in \{1,\cdots,J\}$ by  $e^j\in \mathcal{E}^j$, where $\mathcal{E}^j$ is a finite set. 
For example, 
the error response feature can take a binary value $e^j\in \mathcal{E}^j=\{0,1\}$ based on whether an abnormal error message appears under intentionally erroneous requests \cite{morishita2019detect}. 
We refer to the tuple of $J$ features as the \textit{feature pattern} denoted by  $s=(e^1,\cdots,e^{J})\in \mathcal{S}:=\prod_{j=1}^{J} \mathcal{E}^j$. 
Then, the feature pattern of each node changes dynamically accordingly to the generator $\pi\in \Pi$; i.e., a honeypot and a normal server generate feature pattern $s\in \mathcal{S}$ with frequency $\pi(s|x^H)$ and $\pi(s|x^N)$, respectively. 
\textcolor{black}{Insiders can use these feature patterns as the digital footprint to fingerprint a node's state, either a honeypot or a normal server.} 
The DG still applies to the case when the SOC cannot configure normal server. In that case, the decision variable $\pi(\cdot|x^N)$ will be taken as fixed.
\end{example}

\subsubsection{Utility Transfer}
\label{sec:utility transfer}
The \textit{utility transfer} consists of a scaling factor $\gamma\in [0,\infty)$ and an incentive modulator $c\in \mathcal{C}: \mathcal{A}\mapsto \mathbb{R}$ which modifies the utilities of the defender and the user to be $\hat{v}_D(x,\theta,a)=v_D(x,\theta,a)+\gamma c(a)$ and $\hat{v}_U(x,\theta,a)=v_U(x,\theta,a)- c(a)$, respectively, for all $x\in \mathcal{X}, \theta\in \Theta, a\in \mathcal{A}$. 
Besides monetary (dis)incentives, $c(a)$ can also represent the additional cost or benefit of taking action $a\in\mathcal{A}$. 
 For example, it captures the authentication time to access a normal server or a honeypot. 
 The defender can determine the authentication time to incentivize the user (i.e., $c(a)<0$) or disincentivize him (i.e., $c(a)>0$) to take the action $a\in \mathcal{A}$. 
Although the modulator $c$ is type-independent, its influence on users is type-dependent. 
 For example, a curiosity-driven insider may lose interest and give up accessing confidential data under a long authentication delay or a convoluted \textit{multi-factor authentication} process. 
 However, an adversarial insider can be persistent if the data access leads to a comparably high financial return. 
  Definition \ref{def:actionDominate} defines a special utility structure where one action $a_k\in \mathcal{A}$ yields the highest benefit for the user of type $\theta\in \Theta$ regardless of the state values. 
  For a user with a dominant action, a generator does not influence the user's belief and action. 
\begin{definition}
\label{def:actionDominate}
An action $a_k\in \mathcal{A}$ dominates (resp. is dominated) under type $\theta\in \Theta$ if $\hat{v}_U(x,\theta,a_k) \geq (\text{resp. } \leq)  \hat{v}_U(x,\theta,a), \forall a\in \mathcal{A}, \forall x\in \mathcal{X}$. 
\end{definition}

 \subsection{Timeline for the GMM Mechanism Design} 
 \label{sec:timeline}
As shown in Fig. \ref{fig:honeycase}, the GMM mechanism design in DGs has two stages to achieve the intended outcomes of the defensive deception. 
 At stage one, the defender designs (resp. observes) the generator $\pi\in\Pi$, the manipulator  $b\in \Delta \mathcal{X},b_U(\cdot|\theta)\in \Delta \mathcal{X}, \forall \theta\in \Theta$, and the modulator $c\in \mathcal{C}$ if these components can (resp. cannot) be designed. 
\textcolor{black}{Based on the realized state value $x$, the generator generates a signal $s\in \mathcal{S}$ with probability $\pi(s|x)$. In the insider threat example, the defender configures the feature pattern $s$ with probability $\pi(s|x^H)$ (resp. $\pi(s|x^N)$) when the node is a honeypot (resp. normal server).}  
 At stage two, the user of type $\theta\in \Theta$ receives the signal $s\in \mathcal{S}$ and obtains his posterior belief $b_U^{\pi}$ of the state using the Bayesian rule, i.e., 
\begin{equation}
\label{eq:BayesUpdate}
b_U^{\pi}(x |\theta,s):=\frac{b_U(x |\theta)\pi(s|x)}{\sum_{x'\in \mathcal{X}} b_U(x' |\theta)\pi(s|x')}, \forall x\in\mathcal{X}. 
\end{equation}
Then, the user of type $\theta\in \Theta$ takes a best-response action denoted by  $a^*_{\theta}(b_U^{\pi})\in\mathcal{A}$ to maximize his expected posterior utility under the posterior belief $b_U^{\pi}$, i.e., 
\begin{equation}
\label{eq:optimalAction}
a^*_{\theta}(b_U^{\pi}) \in \textrm{arg}\max_{a\in \mathcal{A}} \mathbb{E}_{x\sim b^{\pi}_U(\cdot|\theta,s)} [ \hat{v}_U (x,\theta,a)]. 
\end{equation}

The utility of the users is a way to capture the user behavior $a^*_{\theta}$. For example, $a^*_\theta$ can represent how an insider routinely follows the security rules or abuses his privilege for personal gain. 
The defender's goal is to determine the optimal GMM mechanism to proactively prevent undesirable user behaviors and improve the security posture.  
This objective is achieved by maximizing her \textit{expected posterior utility} $\bar{v}_D$ that captures the outcomes of the user's behaviors, i.e., 
$
\bar{v}_D(\pi,b,b_U,c):= \mathbb{E}_{x\sim b(\cdot)} \allowbreak
\mathbb{E}_{s\sim \pi(\cdot | x)}\allowbreak
\mathbb{E}_{\theta\sim b_D(\cdot|x)} 
[ \hat{v}_D(x,\theta,a_{\theta}^*(b_U^{\pi}) )].$ 
Different generators provide the user with different amounts of information about the state.  
Two extreme cases
are defined in Definition \ref{def:zero-information experiment}. 
A signal from a zero-information generator denoted by $\pi^0\in \Pi$ does not change the user's belief, i.e., $b_U^{\pi^0}(x|\theta,s)=b_U(x|\theta), \forall s\in \mathcal{S}, \allowbreak
\forall x\in \mathcal{X}, \forall \theta\in \Theta$. 
Meanwhile, a signal from a full-information generator deterministically reveals the state to the user. 
\begin{definition}[\textbf{Zero- and Full-Information Generators}] 
\label{def:zero-information experiment}
 A generator $\pi\in \Pi$ contains zero information if $\pi(s|x)=\pi(s|x'), \forall s\in \mathcal{S}, \forall x,x'\in \mathcal{X}$. It contains full information if the mapping $\pi: \mathcal{X} \mapsto \mathcal{S}$ is injective.
\end{definition}

Readers can refer to Section \ref{sec:case study} for a case study of insider threat that illustrates the two-stage GMM design. 


 
 \subsection{\textcolor{black}{Relation to Bayesian Persuasion}} 
 \label{sec:relation2BP}
 \textcolor{black}{DG-GMM mechanism design can be viewed as a generalized class of the Bayesian persuasion framework \cite{kamenica2011bayesian} with heterogeneous receivers, two-sided asymmetric information, and a joint design of information, incentive, and trust. If the user's type set $\Theta$ is a singleton and the defender cannot design the modulator and the manipulator, then DG-GMM degenerates to the Bayesian persuasion framework. 
The consolidation of the modulator and the manipulator into the mechanism gives the defender a higher degree of freedom to improve the performance in the deception design. It yet increases the computation complexity as illustrated in Section \ref{sec:MathProg} and causes the violation of Bayesian plausibility in Section \ref{sec:Bayesian plausibility}. 
 }
 
\subsubsection{Violation of Bayesian Plausibility}
\label{sec:Bayesian plausibility}
\textcolor{black}{
The concept of  Bayesian plausibility has been defined in  \cite{kamenica2011bayesian}, which states that the expected posterior belief should equal the prior belief for all $\pi\in \Pi$. 
However, we show in Lemma \ref{lemma:Bayesian plausibility} that the trust manipulator can violate Bayesian plausibility when the user of type $\theta\in \Theta$ holds a different initial belief as the defender, i.e.,  $\exists x\in \mathcal{X}: b(x)\neq b_U(x|\theta)$. 
\begin{lemma}[\textbf{Bayesian Plausibility}]
\label{lemma:Bayesian plausibility}
For all $\pi\in \Pi$ and  $\theta\in\Theta$, the user's expected posterior probability $b^e_U(x|\theta):=\sum_{s\in\mathcal{S}} \sum_{x'\in \mathcal{X}}b(x')\pi(s|x') b_U^{\pi}(x|\theta,s)$  is always a valid probability measure yet  
is Bayesian plausible if and only if 
the defender and the user have the same initial belief $b(x)=b_U(x|\theta), \forall x\in \mathcal{X}$. 
\end{lemma}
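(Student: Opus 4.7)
The plan is to split Lemma \ref{lemma:Bayesian plausibility} into three sub-claims: (i) $b_U^e(\cdot|\theta)\in\Delta\mathcal{X}$ for every generator $\pi\in\Pi$ and type $\theta\in\Theta$; (ii) if $b(x)=b_U(x|\theta)$ for all $x$, then $b_U^e(x|\theta)=b_U(x|\theta)$ for all $x$ and every $\pi$; and (iii) if $b\neq b_U(\cdot|\theta)$, some generator $\pi$ breaks Bayesian plausibility. For claim (i), nonnegativity is immediate from the nonnegativity of $b$, $\pi$, and the Bayesian update in \eqref{eq:BayesUpdate}. For normalization, I would reorder the sums to obtain $\sum_x b_U^e(x|\theta)=\sum_s\sum_{x'}b(x')\pi(s|x')\sum_x b_U^\pi(x|\theta,s)=\sum_{x'}b(x')\sum_s \pi(s|x')=1$, where the inner sum $\sum_x b_U^\pi(x|\theta,s)=1$ follows since the Bayesian update is itself a conditional probability distribution over $\mathcal{X}$.

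For the sufficiency direction (ii), I would substitute $b(x')=b_U(x'|\theta)$ into the definition of $b_U^e$ and expand $b_U^\pi$ via \eqref{eq:BayesUpdate}. The weight $b_U(x'|\theta)\pi(s|x')$ appearing in each term is precisely the $x'$-th summand of the denominator $\sum_{x''}b_U(x''|\theta)\pi(s|x'')$ in \eqref{eq:BayesUpdate}, so summing over $x'$ cancels the denominator exactly. The remaining expression is $b_U(x|\theta)\sum_s \pi(s|x)=b_U(x|\theta)$, which is Bayesian plausibility.

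For the necessity direction (iii), I would argue by contrapositive and exhibit an explicit witness. Taking $\pi$ to be a full-information generator per Definition \ref{def:zero-information experiment}, each state $x$ is associated with a unique signal $s_x$ satisfying $\pi(s_x|x)=1$, so the posterior $b_U^\pi(\cdot|\theta,s_x)$ collapses to a point mass at $x$. Substituting into $b_U^e$ yields $b_U^e(x|\theta)=\sum_{x'}b(x')\pi(s_{x'}|x')\mathbf{1}_{\{x'=x\}}=b(x)$. If $b\neq b_U(\cdot|\theta)$, there exists $x$ with $b(x)\neq b_U(x|\theta)$, violating Bayesian plausibility under this $\pi$. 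The main obstacle is choosing the witnessing generator for this step: the zero-information generator is useless because its posterior equals $b_U(\cdot|\theta)$ irrespective of $b$ and therefore preserves Bayesian plausibility vacuously, whereas the full-information generator collapses $b_U^e$ onto the defender's true prior $b$ and thereby makes any discrepancy with $b_U(\cdot|\theta)$ directly detectable.
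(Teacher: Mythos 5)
Your proposal is correct and follows essentially the same route as the paper: validity by reordering the sums, and sufficiency by substituting $b=b_U(\cdot|\theta)$ so that the denominator of \eqref{eq:BayesUpdate} cancels and the expression telescopes to $b_U(x|\theta)\sum_s\pi(s|x)$. The one place you go beyond the paper is the necessity direction, where the paper merely asserts that requiring plausibility under all $\pi\in\Pi$ ``is equivalent to'' $b=b_U(\cdot|\theta)$; your explicit witness via the full-information generator, which collapses $b_U^e(\cdot|\theta)$ onto $b$, makes that step rigorous (up to the shared, unaddressed edge case where $b_U(x|\theta)=0$ renders the update in \eqref{eq:BayesUpdate} undefined).
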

\begin{proof}
A generator $\pi\in \Pi$ generates $s$ with probability $\sum_{x'\in \mathcal{X}}b(x')\pi(s|x')$. 
After receiving $s$, the user of type $\theta$ obtains his posterior belief $b_U^{\pi}(x|\theta,s)$ according to \eqref{eq:BayesUpdate}. 
Thus, the expected posterior probability $\sum_{s\in\mathcal{S}} \sum_{x'\in \mathcal{X}}b(x')\pi(s|x') b_U^{\pi}(x|\theta,s)$ is a valid probability measure over $x$. 
The Bayesian plausibility requires $b^e_U(x|\theta)=\sum_{s\in\mathcal{S}}  \frac{\sum_{x'\in \mathcal{X}}b(x')\pi(s|x') }{\sum_{x'\in \mathcal{X}} b_U(x' |\theta)\pi(s|x')} \pi(s|x) b_U(x |\theta)= b_U(x |\theta), \forall x\in \mathcal{X}$, under all $\pi\in\Pi$, which is equivalent to  the condition $b(x)=b_U(x|\theta), \forall x\in \mathcal{X}$. 
\end{proof}
 }
 
\section{GMM Designs by Mathematical Programming} 
\label{sec:MathProg}
In Section \ref{sec:MathProg}, we provide an integrated design of the GMM mechanism by mathematical programming. 
We first elaborate on the relationship between signals and the user's best-response action to introduce the notion of security policies. 
Each signal $s$ from generator $\pi\in \Pi$ updates the user's belief via \eqref{eq:BayesUpdate} and consequently induces the user of type $\theta\in \Theta$ to take the best-response action $a^*_{\theta}(b_U^{\pi})\in \mathcal{A}$.  
Regardless of the signal set $\mathcal{S}$ and the generator $\pi$, these signals can elicit at most $|\mathcal{A}|^{|\Theta|}=K^M$ distinct outcomes; i.e., the user's best-response action $a^*_{\theta}(b_U^{\pi})$ is $a^l$ if his type is $\theta_l$ for all  permutations of $\theta_l\in \Theta, a^l\in \mathcal{A}$. 
We can aggregate signals in $\mathcal{S}$ based on their elicited actions and divide the entire signal set $\mathcal{S}$ into $K^M$ mutually exclusive subsets denoted as  $\mathcal{S}_{\{a^1,a^2,\cdots,a^M \}}, a^l \in \mathcal{A}, l\in \{1,2,...,M\}$. 
Then, the signals in subset $\mathcal{S}_{\{a^1,a^2,\cdots,a^M \}}$ can be interpreted as the \textit{security policy} that requires the user of type $\theta_l$ to take action $a^l$ for all $l\in \{1,2,\cdots,M\}$. 
Without loss of generality,  we use  one aggregated signal $s_{\{a^1,a^2,\cdots,a^M \}} $ to represent the signals in the set $\mathcal{S}_{\{a^1,a^2,\cdots,a^M \}}$. 
Then, the total number of signals are $|\mathcal{S}|=K^M$, and $\pi(\cdot|x)\in \Delta \mathcal{S}$ is a probability distribution over $K^M$ security policies for each state $x\in \mathcal{X}$. 
 The set $\Pi$ naturally contains two feasibility constraints, i.e.,  $ \pi (s_{\{a^1,\cdots,a^{M}\}}|x)\geq 0, \allowbreak
 \forall s_{\{a^1,\cdots,a^{M}\}}\in \mathcal{S}, \forall x\in\mathcal{X}$, and $\sum_{s_{\{a^1,\cdots,a^{M}\}}\in \mathcal{S}  }\pi (s_{\{a^1,\cdots,a^{M}\}}|x)=1, \forall x\in\mathcal{X}$. 
 \textcolor{black}{In Example \ref{exmple:security policies} below, we continue to use the insider threat scenario in Section \ref{sec:motivatingexample} to illustrate how we obtain security policies based on the feature patterns.} 
 
 \begin{example}[\textcolor{black}{\textbf{Security Policies based on Feature Patterns}}]
 \label{exmple:security policies}
 \textcolor{black}{For binary action set $\mathcal{A}=\{a_{DO},a_{AC}\}$ and binary type set $\Theta=\{\theta^g,\theta^b\}$, the feature patterns in Example \ref{example:featurepattern}} can be aggregated into $K^M=4$ categories of security policies. They are $s_{\{a_{DO},a_{DO}\}}$ (i.e., both types of insiders choose $a_{DO}$), $s_{\{a_{DO},a_{AC}\}}$ (i.e., selfish insiders \textcolor{black}{choose} $a_{AC}$ while adversarial insiders choose $a_{DO}$),  $s_{\{a_{AC},a_{DO}\}}$ (i.e., adversarial insiders \textcolor{black}{choose} $a_{AC}$ while selfish insiders choose  $a_{DO}$), and $s_{\{a_{AC},a_{AC}\}}$ (i.e., both types of insiders choose  $a_{AC}$). 
 \end{example}
 
We can rewrite \eqref{eq:optimalAction} concerning security policies as follows, i.e.,  
$\sum_{x\in \mathcal{X}} b_U^{\pi}(x |\theta_l,s_{\{a^1,\cdots,a^{M}\}})  [ \hat{v}_U(x,\theta_l,a^l)-
\hat{v}_U(x,\theta_l,a^h)  ]  \geq 0, \forall s_{\{a^1,\cdots,a^{M}\}}\in \mathcal{S} , \forall a^h \in \mathcal{A}, \forall \theta_l\in \Theta$.
 The defender's expected posterior utility $\bar{v}_D(\pi,b,b_U,c)$ can be equivalently  represented as 
$
\sum_{x\in \mathcal{X}} b(x)  \allowbreak  \sum_{s_{\{a^1,\cdots,a^{M}\}}\in \mathcal{S}  }  \allowbreak 
\pi(s_{\{a^1,\cdots,a^{M}\}}|x)  \allowbreak 
\sum_{\theta_l\in \Theta} 
b_D(\theta_l|x)  
\hat{v}_D(x,\theta_l,a^l)
$. 
Replacing $b_U^{\pi}$ with \eqref{eq:BayesUpdate}, we formulate the GMM mechanism design as the following constrained optimization COP. 
\begin{equation*}
\begin{split}
& \text{(COP):}   \quad 
r:=\sup_{\pi\in \Pi,b,b_U, c\in \mathcal{C}} \quad  \bar{v}_D(\pi,b,b_U,c) \\ 
& (\text{IC}) 
\sum_{x\in \mathcal{X}}  [ \hat{v}_U(x,\theta_l,a^l)
- 
\hat{v}_U(x,\theta_l,a^h)  ] \pi(s_{\{a^1,\cdots,a^{M}\}}|x)   \\
 & \quad\quad 
 b_U(x |\theta_l)
 \geq 0,
\forall s_{\{a^1,\cdots,a^{M}\}}\in \mathcal{S} , \forall a^h \in \mathcal{A}, \forall \theta_l\in \Theta.  \
\\ 
& (\text{MF}) \  c(a_{DO})=0.
\end{split}
\end{equation*} 
\textcolor{black}{
The decision variables $\pi$, $b$, $b_U$, and $c$ are vectors of dimension $N\times K^M$, $N$, $N\times M$, and $K$, respectively. 
The feasibility constraint contained in  $\Pi$ and the Incentive-Compatible (IC) constraint induce $N\times K^M+1$ and $K^M\times K\times M$ constraints, respectively.  
}

Denote $b^*,b_U^*,\pi^*,c^*$ as the maximizers of COP and $r$ as the value of the objective function under the maximizers.  
The (IC) constraint requires all security policies from the generator to be  compatible with the user's incentives; 
i.e., the user receives the maximum benefit on average when taking the action required by the security policy. 
A security policy cannot be generated if it is not incentive-compatible. 
Based on the (IC) constraint, we define the credible and the optimal generators in Definition \ref{def:Feasible and credible Generator} and enforceable security policies in Definition \ref{def:Enforceable Policies}.  
\begin{definition}[\textbf{Credible and Optimal Generators}]
\label{def:Feasible and credible Generator}
A generator $\pi\in \Pi$ is called credible if it satisfies \text{(IC)}. 
A credible generator is called optimal if it maximizes COP. 
\end{definition}
\begin{definition} [\textbf{Enforceable Security Policies}]
\label{def:Enforceable Policies}
For a given generator $\pi\in \Pi$, a security policy $s_{\{a^1,\cdots,a^{M}\}}\in \mathcal{S}$ is enforceable (resp. unenforceable) if $\exists x\in \mathcal{X}$ such that $\pi(s_{\{a^1,\cdots,a^{M}\}}|x) \neq 0$ (resp. 
$\pi(s_{\{a^1,\cdots,a^{M}\}}|x) = 0, \forall x\in \mathcal{X}$).  
\end{definition}

\textcolor{black}{The Modulation-Feasible (MF) constraint results from the fact} 
that the defender cannot modulate the user's incentive if the user does not participate in the game. 
Although the co-domain of $c$ is $\mathbb{R}$, Theorem \ref{thm:feasibleBounded} shows that the optimal utility transfer $c^*\in \mathcal{C}$ has to remain bounded due to the user's potential threat of taking the drop-out action $a_{DO}$. 
We define the following shorthand notations for Theorem \ref{thm:feasibleBounded}, i.e.,  $\underline{c}(\theta,a):=\max_{x\in \mathcal{X}} v_U(x, \theta, a)-v_U(x, \theta, a_{DO})$, $\bar{r}=\max_{x\in \mathcal{X}} \mathbb{E}_{\theta\sim b_D}[\max_{a\in \mathcal{A}} v_D(\theta,x,a)] $ and $\underline{r}=\min_{x\in \mathcal{X}}\mathbb{E}_{\theta\sim b_D}[\min_{a\in \mathcal{A}} v_D(\theta,x,a)] $. 

\begin{theorem}[\textbf{Feasibility and Design Capacity}]
\label{thm:feasibleBounded}
COP is feasible and bounded. 
The upper bound of $r$ is $ \max \{\max_{x\in \mathcal{X}}\mathbb{E}_{\theta\sim b_D} [v_D(x,\theta,a_{DO})],  \allowbreak \bar{r} 
+ \gamma  \max_{ a\in \mathcal{A},  \theta\in \Theta} \underline{c}(\theta,a) \}$ and the lower bound is $\underline{r}$.  
\end{theorem}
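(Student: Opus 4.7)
The plan is to establish feasibility by exhibiting an explicit admissible tuple, and then derive both bounds by combining the value of this witness with the leverage provided by the IC constraints.

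\textbf{Feasibility and lower bound.} I would take the universal drop-out policy as the witness. Set $\pi(s_{\{a_{DO},\cdots,a_{DO}\}}|x)=1$ for every $x\in\mathcal{X}$, fix $c(a_{DO})=0$ in accordance with (MF), choose $c(a)\geq \max_{\theta\in\Theta}\underline{c}(\theta,a)$ for every $a\neq a_{DO}$, and pick any admissible $b$ and $b_U$. The only enforceable signal is then the all-drop-out one, and its IC inequality with deviation $a^h$ reduces to $\sum_{x}\bigl[v_U(x,\theta_l,a_{DO})-v_U(x,\theta_l,a^h)+c(a^h)\bigr]\, b_U(x|\theta_l)\geq 0$. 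The bracket is nonnegative pointwise in $x$ by the choice of $c(a^h)$, so the inequality holds and COP is feasible. Evaluating $\bar{v}_D$ at this tuple yields $\mathbb{E}_{x\sim b}\mathbb{E}_{\theta\sim b_D(\cdot|x)}[v_D(x,\theta,a_{DO})]$, which pointwise dominates $\mathbb{E}_{\theta}[\min_{a}v_D(x,\theta,a)]\geq \underline{r}$. Since $r$ is the supremum over feasible tuples, $r\geq\underline{r}$.

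\textbf{Upper bound.} The crucial step is to translate the IC constraint into a pointwise bound on $c$. For any enforceable signal $s_{\{a^1,\cdots,a^M\}}$ and any type $\theta_l$ whose required action $a^l\neq a_{DO}$, the IC inequality with deviation $a^h=a_{DO}$, together with $c(a_{DO})=0$, rearranges to
\begin{equation*}
c(a^l)\sum_{x\in\mathcal{X}}\pi(s|x)b_U(x|\theta_l)\leq \sum_{x\in\mathcal{X}}\bigl[v_U(x,\theta_l,a^l)-v_U(x,\theta_l,a_{DO})\bigr]\pi(s|x)b_U(x|\theta_l).
\end{equation*}
Whenever the marginal on the left is positive---which is precisely the case in which the signal contributes to the objective for type $\theta_l$---dividing bounds the weighted average by the pointwise maximum, giving $c(a^l)\leq\underline{c}(\theta_l,a^l)\leq\max_{a,\theta}\underline{c}(\theta,a)$. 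Together with $c(a_{DO})=0=\underline{c}(\theta,a_{DO})$ and $\gamma\geq 0$, this yields the uniform bound $\hat{v}_D(x,\theta,a_{\theta}^*)\leq\max_{a}v_D(x,\theta,a)+\gamma\max_{a,\theta}\underline{c}(\theta,a)$, so taking expectations gives $\bar{v}_D\leq\bar{r}+\gamma\max_{a,\theta}\underline{c}(\theta,a)$. When the generator only enforces the all-drop-out signal, the modulator term vanishes and $\bar{v}_D\leq\max_{x}\mathbb{E}_{\theta}[v_D(x,\theta,a_{DO})]$. Taking the maximum of the two regimes delivers the claimed upper bound.

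\textbf{Main obstacle.} The delicate step is the division used to pass from the IC inequality to $c(a^l)\leq\underline{c}(\theta_l,a^l)$. One must verify that $\sum_{x}\pi(s|x)b_U(x|\theta_l)>0$ whenever the signal $s$ actually affects the objective through type $\theta_l$, since otherwise the Bayesian update is undefined and the IC inequality becomes vacuous. A secondary subtlety is handling the two regimes of the upper bound consistently when the generator mixes the drop-out signal with others; here the uniform bound argument subsumes the trivial regime because $\underline{c}(\theta,a_{DO})=0$.
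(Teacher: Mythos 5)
Your proof is correct and, for the upper bound, is actually tighter and more explicit than the paper's. The paper establishes feasibility with a different witness (a zero-information generator labeled with each type's best response under the prior, which is IC for any modulator $c$), whereas you construct a specific large $c$ that makes $a_{DO}$ dominant; both work, and your witness doubles as the certificate for $r\geq \underline{r}$ exactly as the paper's does. For boundedness, the paper argues behaviorally and somewhat informally: it does a case split on whether the optimal $c^*(a_j)$ exceeds $\underline{c}(\theta,a_j)$ and asserts that the user would then drop out, concluding the two branches of the stated $\max$. You instead extract the inequality $c(a^l)\leq \underline{c}(\theta_l,a^l)$ algebraically from the (IC) constraint with deviation $a_{DO}$ and the (MF) normalization, which yields the single uniform bound $r\leq \bar{r}+\gamma\max_{a,\theta}\underline{c}(\theta,a)$; since $\underline{c}(\theta,a_{DO})=0$ and $\gamma\geq 0$, this one bound dominates the first branch of the paper's $\max$ and hence implies the stated result. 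The one caveat you correctly flag --- that the division step requires $\sum_x\pi(s|x)b_U(x|\theta_l)>0$, while the objective weights signals by $b(x)\pi(s|x)b_D(\theta_l|x)$, and under a covert manipulator these supports can differ so that (IC) becomes vacuous for a signal that still contributes to the objective --- is a genuine corner of the model, but it is one the paper's own proof does not address either (its behavioral ``the user will choose $a_{DO}$'' argument implicitly assumes the user's posterior, and hence best response, is well defined for every realized signal). So this is a shared modeling assumption rather than a defect specific to your argument.
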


\begin{proof}
We first prove the feasibility. 
Define shorthand notation $a^{*,l}:=arg\max_{a\in \mathcal{A}} \mathbb{E}_{x\sim b_U(x|\theta_l)}[v_U (x,\theta_l,a)-c(a)], \forall l\in \{1,\cdots,M\}$, as the optimal action of the user of type  $\theta_l\in \Theta$ under any feasible prior belief $b_U(x|\theta_l)$ and modulator $c\in \mathcal{C}$. 
Then, the zero-information generator $\pi^0(s_{(a^{*,1}, \cdots, a^{*,M})} |x)=1, \forall x\in \mathcal{X}$, is a feasible solution to COP. 

We prove the boundedness in two steps. We first consider $c(a)=0, \forall a\in \mathcal{A}$. 
Since all decision variables $b, \pi, b_D$ are probability measures, we obtain the upper bound $\bar{r}$ and the low bound $\underline{r}$ of $r$.   
In the second step, we turn the modulator $c$ into a free decision variable with the (MF) constraint. 
Since $c(a)=0, \forall a\in \mathcal{A}$, is a feasible solution, the maximum value of COP does not increase. 
Thus, the value of $\underline{r}$ is bounded. 
To show that the value of $\bar{r}$ is bounded in step two, we focus on 
action $a_j\in \mathcal{A}$, if it exists, that results in a non-negative maximizer  $c^*(a_j)$. 
On the one hand, if $\underline{c}(\theta,a_j)\leq 0, \forall \theta\in \Theta$, then the drop-out action $a_{DO}$ dominates for all types and $r=\max_{b\in \Delta \mathcal{X}} \mathbb{E}_{x\sim b}\mathbb{E}_{\theta\sim b_D} [v_D(x,\theta,a_{DO})] \leq \max_{x\in \mathcal{X}}\mathbb{E}_{\theta\sim b_D} [v_D(x,\theta,a_{DO})]$.
On the other hand, if there exists a type $\theta\in \Theta$ where $\underline{c}(\theta,a_j)>0$ and 
$ c^*(a_j) \geq \underline{c}(\theta,a_j)$, then the user of type $\theta$ will choose the drop-out action $a_{DO}$. 
Thus, 
$r \leq  \gamma  \max_{ a\in \mathcal{A},  \theta\in \Theta} \underline{c}(\theta,a)$. 
\end{proof}

The upper and lower bounds provide the design capacity of the GMM mechanism. 
COP is unbounded without the (MF) constraint as the defender can arbitrarily increase (resp. decrease) the value of $r$ by letting $c(a)$ be an arbitrarily large (resp. small) constant. 
If $c(a)=0, \forall a\in \mathcal{A}$, we can transform COP into a Linear Program (LP) by introducing the following variables, i.e.,  $\eta(s_{\{a^1,\cdots,a^{M}\}},x):=b(x)\pi (s_{\{a^1,\cdots,a^{M}\}}|x)$ and  $\eta_U(\theta, s_{\{a^1,\cdots,a^{M}\}},x):=b_U(x|\theta)\pi (s_{\{a^1,\cdots,a^{M}\}}|x) $.  These new variables take non-negative values and satisfy the following  constraints, i.e., $\sum_{x\in \mathcal{X}, s_{\{a^1,\cdots,a^{M}\}}\in \mathcal{S} }\eta=1$ and $\sum_{x\in \mathcal{X}, s_{\{a^1,\cdots,a^{M}\}}\in \mathcal{S} } \eta_U=1, \forall 
\theta\in \Theta$. 
After we have solved the LP, 
we can obtain the initial beliefs by $b(x)=\sum_{s_{\{a^1,\cdots,a^{M}\}}\in \mathcal{S}} \eta(s_{\{a^1,\cdots,a^{M}\}},x)$ and  $b_U(x|\theta)=\sum_{s_{\{a^1,\cdots,a^{M}\}}\in \mathcal{S}} \eta_U(\theta, s_{\{a^1,\cdots,a^{M}\}},x)$ for all $x\in \mathcal{X}, \theta\in \Theta$.  



\section{Graphical Analysis of GMM Designs} 
\label{sec:Concavification}
In Section \ref{sec:MathProg}, we aggregate signals into $K^M$ equivalent security policies to relate them with the user's best-response action. 
In Section \ref{sec:Concavification}, we directly analyze the posterior belief and the action as each signal uniquely determines a posterior belief. 
Throughout Section \ref{sec:Concavification}, we focus on the \textit{overt} trust manipulator defined in Section \ref{sec:belief statistics}, i.e., $b_U(x|\theta)=b(x), \forall x\in \mathcal{X}, \theta\in \Theta$. 
Define $p^0_j:=b(x_j), \forall j\in \{1,\cdots,N\}$, and the common prior belief in the vector form as $\mathbf{p}^0:=[p^0_1,\cdots,p^0_N]$.  
Since different types of users have the same initial beliefs, the posterior beliefs are also the same. 
Denote $p_j\in[0,1]$ as the user's posterior belief under state $x_j\in \mathcal{X}, \forall j\in \{1, \cdots, N\}$. 
Define the belief vector $\mathbf{p}:=[p_1, \cdots, p_N]$ and the
utility vector $\mathbf{\hat{v}}_U(\theta, a):= [ \hat{v}_U (x_1,\theta,a), \cdots, \hat{v}_U (x_N,\theta,a)]'$ where notation $'$ denotes the matrix transpose. 
For both the prior and the posterior belief vectors, the total probability is one, i.e., $\sum_{n=1}^N {p}^0_n=1$ and $\sum_{n=1}^N {p}_n=1$. 

Section \ref{subsection:benchmark} provides the optimal generator design under the benchmark case where the defender can neither modify the user's incentive, i.e., $c(a)=0,\forall a\in \mathcal{A}$, nor manipulate their initial beliefs.  
Section \ref{subsec:Modulator}  incorporates the modulator and the manipulator into the GMM mechanism design. 

\subsection{Generator Design under the Benchmark Case}
\label{subsection:benchmark}
We rewrite \eqref{eq:optimalAction} in its matrix form as $a_{\theta}^*(\mathbf{p})\in arg\max_{a\in \mathcal{A}} \allowbreak
\mathbf{p} \mathbf{\hat{v}}_U(\theta, a)$. 
Since $\mathbf{p} \mathbf{\hat{v}}_U(\theta, a)$ is an affine function of $\mathbf{p}$ for any action $a\in \mathcal{A}$, maximizing $\mathbf{p} \mathbf{\hat{v}}_U(\theta, a)$ over $a$ in the convex domain $\mathbf{p}\in \Delta \mathcal{X}$ results in a Piece-Wise Linear and Convex (PWLC) function as summarized in Proposition \ref{proposition:PWLC}. The proof of convexity follows directly from the fact that $a_{\theta}^*(\mathbf{p})$ is the point-wise maximum of a group of affine functions over $\mathbf{p}$.  
\begin{proposition}
\label{proposition:PWLC}
The user's expected posterior utility under a give type $\theta\in \Theta$, i.e., $\max_{a\in \mathcal{A}} \mathbf{p} \mathbf{\hat{v}}_U(\theta, a) $, is continuously PWLC with respect to vector $\mathbf{p}\in \Delta\mathcal{X}$. 
\end{proposition}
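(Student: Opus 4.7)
The plan is to exploit the explicit affine structure of $\mathbf{p}\mapsto \mathbf{p}\hat{\mathbf{v}}_U(\theta,a)$ for each fixed $a\in\mathcal{A}$ and then invoke standard properties of finite pointwise maxima of affine functions. First I would fix $\theta\in\Theta$ and denote $f_a(\mathbf{p}):=\mathbf{p}\hat{\mathbf{v}}_U(\theta,a)$; since $\hat{\mathbf{v}}_U(\theta,a)\in\mathbb{R}^N$ is a constant vector once $\theta$ and $a$ are fixed, $f_a$ is linear (in particular affine and continuous) on the whole simplex $\Delta\mathcal{X}$. The quantity in the proposition is $F_\theta(\mathbf{p}):=\max_{a\in\mathcal{A}} f_a(\mathbf{p})$, which is a pointwise maximum over the finite set $\mathcal{A}$ (of cardinality $K$).

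Next I would verify convexity directly: for $\mathbf{p}_1,\mathbf{p}_2\in\Delta\mathcal{X}$ and $\lambda\in[0,1]$, each $f_a$ satisfies $f_a(\lambda\mathbf{p}_1+(1-\lambda)\mathbf{p}_2)=\lambda f_a(\mathbf{p}_1)+(1-\lambda)f_a(\mathbf{p}_2)\leq \lambda F_\theta(\mathbf{p}_1)+(1-\lambda)F_\theta(\mathbf{p}_2)$, and taking the max over $a$ on the left gives convexity of $F_\theta$. To obtain the piecewise-linear structure, I would partition $\Delta\mathcal{X}$ into the cells $R_a:=\{\mathbf{p}\in\Delta\mathcal{X}:f_a(\mathbf{p})\geq f_{a'}(\mathbf{p})\text{ for all }a'\in\mathcal{A}\}$; each $R_a$ is an intersection of $\Delta\mathcal{X}$ with finitely many closed half-spaces, hence a (possibly empty) polyhedron, and on $R_a$ the function $F_\theta$ coincides with the affine $f_a$. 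Since $\mathcal{A}$ is finite the collection $\{R_a\}_{a\in\mathcal{A}}$ is a finite polyhedral cover of $\Delta\mathcal{X}$, establishing that $F_\theta$ is piecewise linear.

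For continuity I would simply note that the pointwise maximum of finitely many continuous functions is continuous; alternatively, any finite-valued convex function on the simplex is continuous on its relative interior and extends continuously to the boundary because each cell $R_a$ is closed and $F_\theta$ agrees with the continuous $f_a$ on $R_a$, so agreement across shared faces gives global continuity.

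The argument is essentially routine, so there is no real obstacle; the only care needed is to state that the partition cells are polyhedral and that their closures cover $\Delta\mathcal{X}$, so that the three pieces (piecewise-linear, convex, continuous) can be combined into the single claim of a continuous PWLC function.
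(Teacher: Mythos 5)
Your proposal is correct and follows essentially the same route as the paper, which likewise observes that each $\mathbf{p}\mapsto\mathbf{p}\,\mathbf{\hat{v}}_U(\theta,a)$ is affine and concludes PWLC from the pointwise maximum of finitely many affine functions. Your write-up merely spells out the polyhedral cell decomposition and the continuity argument that the paper leaves implicit.
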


We visualize $\max_{a\in \mathcal{A}} \mathbf{p} \mathbf{\hat{v}}_U(\theta, a)$ under a binary state set in Fig. \ref{fig:binaryPWLC}. 
When $N=2$, we can use the first element $p_1$ as the $x$-axis to uniquely represent the posterior belief $\mathbf{p}\in \Delta\mathcal{X}$. 
The four belief thresholds, i.e., $0, t_1^{\theta},t_2^{\theta}$, and $1$, divide the entire belief region of $p_1\in [0,1]$ into three sub-regions.  
The user of type $\theta$ takes action $a_{K-1}$ if his posterior belief belongs to the sub-region $p_1\in [0,t_1^{\theta}]$, action $a_1$ if $p_1\in [t_1^{\theta},t_2^{\theta}]$, and action $a_{DO}$ if $p_1\in [t_2^{\theta},1]$. 
Although action $a_2$ is not dominated under type $\theta$ based on Definition \ref{def:actionDominate}, it is inactive over $p_1\in [0,1]$. 
\begin{figure}[h]
\centering
\includegraphics[width=1\columnwidth , height=0.4\columnwidth
]{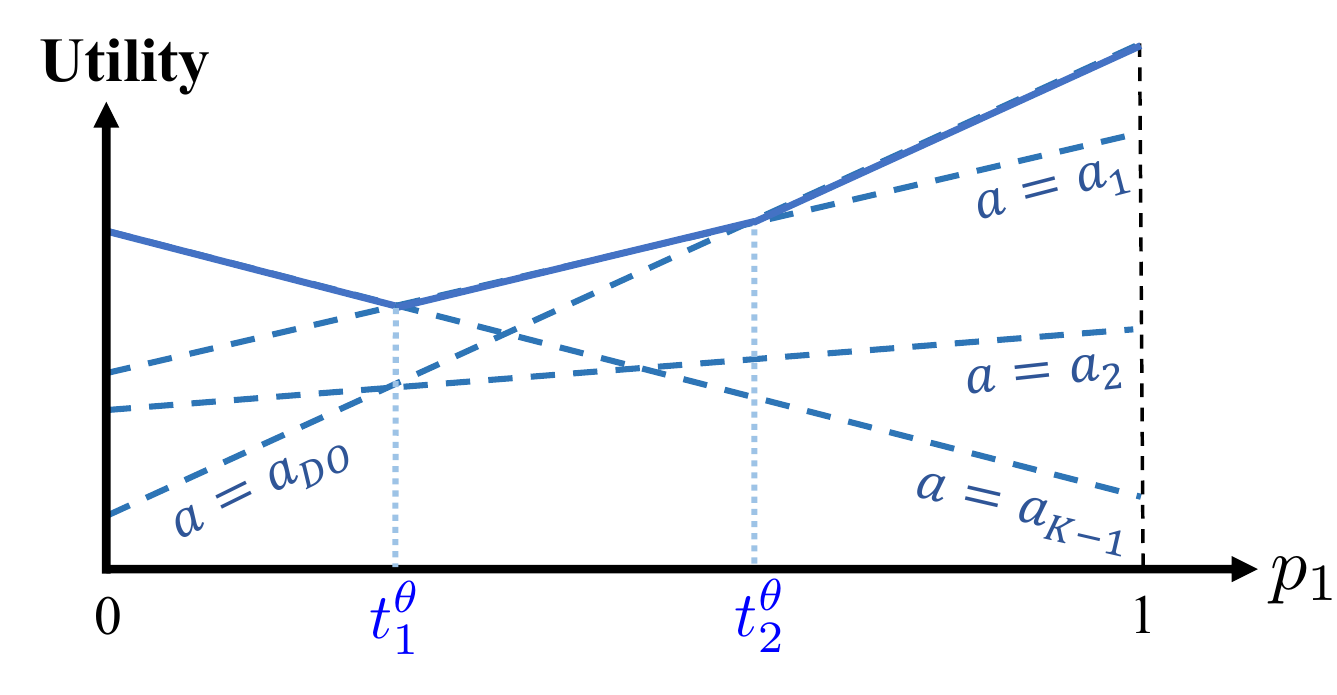} 
\caption{ \label{fig:binaryPWLC}
The expected posterior utility of the user of type $\theta \in \Theta$ versus posterior belief $p_1\in [0,1]$. 
The solid lines represent the utility $\max_{a\in \mathcal{A}} \sum_{n=1}^{N} p_n  \hat{v}_U (x_n,\theta,a)$ as a PWLC function of $p_1$. 
}
\end{figure} 

For a high-dimensional state space  $N\geq 3$, the user's entire belief region  $\Delta \mathcal{X}$ is an $N-2$ simplex. For each type $\theta$, we can divide the entire belief region into at most $K$ sub-regions $\mathcal{C}^{\theta}_{a_i}:=\{\mathbf{p}\geq \mathbf{0} | \mathbf{p}' [\mathbf{\hat{v}}_U(\theta, a_i)-\mathbf{\hat{v}}_U(\theta, a_j)] \geq 0, \forall a_j\in \mathcal{A}$. 
Then, $\Delta \mathcal{X}=\cup_{i\in \{\textcolor{black}{DO},1,\cdots,K-1\}} \mathcal{C}_{a_i}^{\theta}$. 
If the posterior belief  falls into the sub-region $\mathcal{C}_{a_i}^{\theta}$,  the user of type $\theta$ takes $a_i$ as his best-response action. 
Take Fig. \ref{fig:binaryPWLC} as an example,  $\mathcal{C}_{a_{DO}}^{\theta}$ is the interval $[t_2^{\theta},1]$ and  $\mathcal{C}_{a_2}^{\theta}$ is the empty set. 
As a direct result of the definition of convexity,  
sets $\mathcal{C}_{a_i}^{\theta}, \forall i\in \{\textcolor{black}{DO},1,\cdots, K-1\}$, are convex and connected. 

We have illustrated the belief region partition under any given type $\theta\in \Theta$. 
Since the user has $M$ possible types, we further divide the belief region into finer sub-regions. 
Let $\mathcal{C}_{\{a^1,\cdots,a^M\}}:=\mathcal{C}_{a^1}^{\theta_1} \cap \cdots \cap \mathcal{C}_{a^M}^{\theta_M}$ be the sub-region of the posterior belief under which the best-response action of the 
user of type $\theta_l, \forall l \in \{1,\cdots,M\}$, is action $a^l\in \mathcal{A}$. 
In particular, define $\mathcal{C}_{i,j}^{l,h}:=\mathcal{C}_{a_i}^{\theta_l} \cap \mathcal{C}_{a_j}^{\theta_h}$ as the belief region where the user takes action $a_i$ when his type is $\theta_l$ and $a_j$ when his type is $\theta_h$ for all $i,j\in \{\textcolor{black}{DO},1,\cdots,K-1\}$ and $l\neq h, \forall l,h\in \{1,\cdots,M\}$. 
\textcolor{black}{Based on the definition, $\mathcal{C}_{i,j}^{l,h}\equiv \mathcal{C}_{j,i}^{h,l}$.} 
Since the intersection of any collection of convex sets is convex, $\mathcal{C}_{\{a^1,\cdots,a^M\}}$ and $\mathcal{C}_{i,j}^{l,h}$ are all convex and connected sets, i.e.,  convex polytopes. 
\textcolor{black}{
We visualize these convex polytopes in Fig. \ref{fig:partision} when there are two types $M=2$, two actions $K=2$, and three states $N=3$. The belief region $\Delta \mathcal{X}$ is an $N-2$ simplex, i.e., an equilateral triangle. 
Under type $\theta_1$, the belief region is divided into $\mathcal{C}_{a_{DO}}^{\theta_1}=\mathcal{C}_{\{a_{DO},a_1\}}\cup \mathcal{C}_{\{a_{DO},a_{DO}\}}$ and $\mathcal{C}_{a_1}^{\theta_1}=\mathcal{C}_{\{a_{1},a_1\}}\cup \mathcal{C}_{\{a_{1},a_{DO}\}}$. 
Under type $\theta_2$, the belief region is divided into $\mathcal{C}_{a_{DO}}^{\theta_2}=\mathcal{C}_{\{a_{DO},a_{DO}\}}\cup \mathcal{C}_{\{a_{1},a_{DO}\}}$ and $\mathcal{C}_{a_1}^{\theta_2}=\mathcal{C}_{\{a_{1},a_1\}}\cup \mathcal{C}_{\{a_{DO},a_{1}\}}$. 
Since there are only two types, we have $\mathcal{C}_{1,DO}^{1,2}=\mathcal{C}_{\{a_{1},a_{DO}\}}$. 
}
\begin{figure}[h]
\centering
\includegraphics[width=.85\columnwidth]{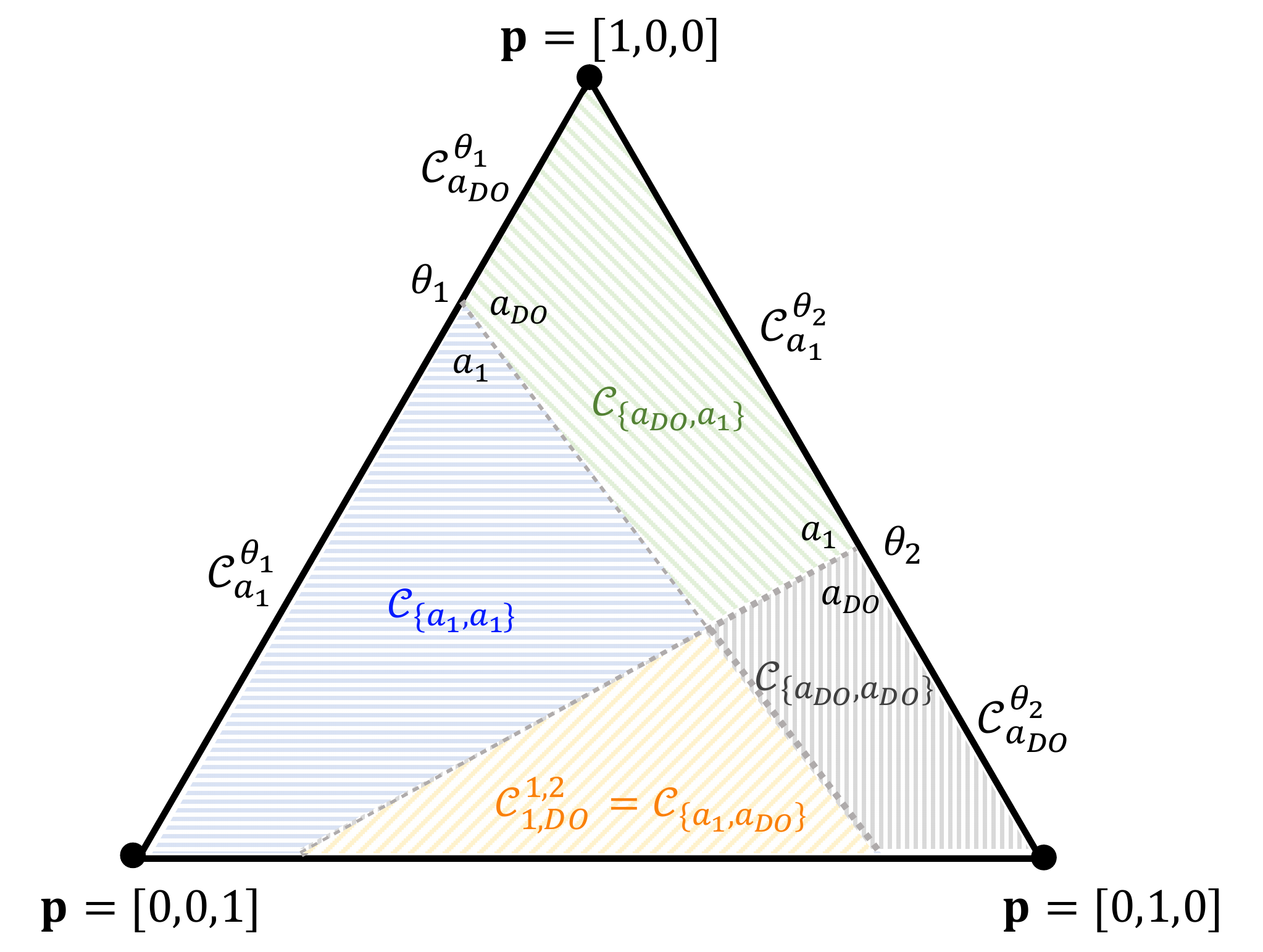}
\caption{ \label{fig:partision}
\textcolor{black}{
Illustration of  $K^M=4$ convex polytopes $\mathcal{C}_{\{a_1,a_1\}}$, $\mathcal{C}_{\{a_{DO},a_1\}}$, $\mathcal{C}_{\{a_{DO},a_{DO}\}}$, and $\mathcal{C}_{\{a_1,a_{DO}\}}$ in blue (horizontal stripes), green (downward diagonal stripes), grey (vertical stripes), and orange (upward diagonal stripes), respectively. 
Each point in the equilateral triangle represents a belief $\mathbf{p}=[p_1,p_2,p_3]\in \Delta\mathcal{X}$. 
}
}
\end{figure}


Among $K^M$ possible sets  $\mathcal{C}_{\{a^1,\cdots,a^M\}}, \forall a^l\in \mathcal{A},l\in \{1,\cdots,M\}$, most of them are empty. 
Take $N=2$ as an example,  $K$ actions can generate at most $K(K-1)/2$ belief thresholds over $p_1\in (0,1)$ for each type as shown in Fig. \ref{fig:binaryPWLC}. Thus, the whole belief region $p_1\in [0,1]$ can be divided into at most $MK(K-1)/2+1$ regions under $M$ types. 
When $N=3$, the belief region is an equilateral triangle  \textcolor{black}{as shown in Fig. \ref{fig:partision}.} 
For each given type, $K$ actions represent $K$ planes. 
Projecting these planes vertically onto the equilateral triangle, we obtain at most $K(K-1)/2$ lines. 
Thus, these lines under $M$ types can divide the equilateral triangle into at most $\frac{MK(K-1)}{2}(\frac{MK(K-1)}{2}+1)/2$ belief regions. 
The results can be extended to $N>3$ as a variant of the hyperplane arrangement problem \cite{orlik2013arrangements}. 
\textcolor{black}{We summarize the above result in Proposition \ref{proposition:DivisionRestricts}; i.e.,} the number of belief region partitions grows in a polynomial rate denoted by $\chi (K,M,N)$ rather than the exponential rate of $K^M$, where $\chi (K,M,N)$ is a polynomial function of $K,M$ for each $N$.  

\begin{proposition} [\textbf{Upper Limit of Enforceable Policies}]  
\label{proposition:DivisionRestricts}
For any credible generator,  
at most $\chi (K,M,N)$ security policies are enforceable. 
\end{proposition}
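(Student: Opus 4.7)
The plan is to formalize the hyperplane-arrangement sketch given in the paragraph preceding the proposition. First, I would note that under the overt trust manipulator (assumed throughout Section \ref{sec:Concavification}), each signal from a credible generator induces, via \eqref{eq:BayesUpdate}, a single posterior belief vector $\mathbf{p}\in\Delta\mathcal{X}$, which lies in an $(N-1)$-dimensional simplex. By Definition \ref{def:Enforceable Policies} and the aggregation argument in Section \ref{sec:MathProg}, an enforceable security policy $s_{\{a^1,\ldots,a^M\}}$ exists only when the corresponding subregion $\mathcal{C}_{\{a^1,\ldots,a^M\}}=\bigcap_{l=1}^M \mathcal{C}_{a^l}^{\theta_l}$ is nonempty, since any posterior generated by the signal must induce exactly that action tuple as best response. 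Hence counting enforceable policies reduces to counting nonempty cells of the partition of $\Delta\mathcal{X}$ induced by best-response regions across types.

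Second, I would frame this partition as a hyperplane arrangement. For each type $\theta_l\in\Theta$ and each unordered action pair $\{a_i,a_j\}\subseteq\mathcal{A}$, the locus
\[
\mathbf{p}'\bigl[\mathbf{\hat v}_U(\theta_l,a_i)-\mathbf{\hat v}_U(\theta_l,a_j)\bigr]=0
\]
defines an affine hyperplane separating the two half-spaces in which one action dominates the other. Across all types and action pairs, there are at most $H\le M\binom{K}{2}=MK(K-1)/2$ such hyperplanes. The cells $\mathcal{C}_{\{a^1,\ldots,a^M\}}$ are precisely the nonempty intersections of the resulting open half-spaces with $\Delta\mathcal{X}$, and convexity of each $\mathcal{C}_{a^l}^{\theta_l}$ (already noted in the text) guarantees that each nonempty cell is a single convex polytope.

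Third, I would invoke the classical Schläfli--Zaslavsky bound on the number of full-dimensional regions cut from an affine space of dimension $N-1$ by $H$ hyperplanes, namely $\sum_{k=0}^{N-1}\binom{H}{k}=O(H^{N-1})$. Substituting $H\le MK(K-1)/2$ yields a bound of the form $\chi(K,M,N)=O\!\bigl((MK^2)^{N-1}\bigr)$, polynomial in $K,M$ for each fixed $N$. Restricting the arrangement to the simplex $\Delta\mathcal{X}$ can only reduce the number of regions (since the simplex is convex, its intersection with each open cell remains connected), so the same bound applies, and the low-dimensional cases in the text ($N=2$ giving $MK(K-1)/2+1$; $N=3$ giving $\tfrac12 H(H{+}1)+1$) are recovered as specializations. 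This gives the polynomial $\chi(K,M,N)$ advertised in the statement.

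The main obstacle is bookkeeping around degenerate configurations: ties in utility (where multiple actions are simultaneously optimal for some type) correspond to points lying on the hyperplanes themselves, not in the open cells; one must check that these measure-zero loci do not contribute additional enforceable policies beyond the full-dimensional cells, which follows because the user's best-response in \eqref{eq:optimalAction} may be taken as any maximizer and thus can be pushed to a neighboring full-dimensional cell without changing credibility. A secondary subtlety is that different signals whose posteriors fall in the same cell generate the same policy (as already leveraged in the aggregation to $K^M$ signals), so the cell count is exactly an upper bound on distinct enforceable policies, not on signals themselves.
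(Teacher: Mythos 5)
Your proposal is correct and follows essentially the same route as the paper: the paper's own argument (given in the paragraph preceding the proposition) is exactly this hyperplane-arrangement count --- at most $MK(K-1)/2$ pairwise-comparison hyperplanes partitioning the belief simplex, worked out explicitly for $N=2$ and $N=3$ and extended to general $N$ by citing the hyperplane arrangement literature. Your version is simply a more rigorous rendering, making explicit the Zaslavsky-type bound $\sum_{k=0}^{N-1}\binom{H}{k}$ and the treatment of tie loci that the paper leaves implicit.
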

\begin{remark}
\textcolor{black}{Solely dependent on the user's utility vector $\mathbf{\hat{v}}_U$, the belief partition $\Delta \mathcal{X}=\cup_{a^1\in \mathcal{A},\cdots,a^M\in \mathcal{A}} \mathcal{C}_{\{a^1,\cdots,a^M\}}$ characterizes the user's incentive under different types.}   
If $\mathcal{C}_{\{a^1,\cdots,a^M\}}=\emptyset$, then the security policies that require the user of type $\theta_l$ to take action $a^l$ for any $l\in \{1,\cdots,M\}$ are unenforceable as they violate the user's incentive. 
Proposition \ref{proposition:DivisionRestricts} illustrates that the number of enforceable security policies cannot exceed a threshold determined by $K,M,N$; 
i.e., among all $|\mathcal{S}|=K^M$ potential security policies, the defender can choose at most $\chi (K,M,N)$ ones to be compatible with the user's incentive. 
\end{remark}

\subsubsection{Cyber Attribution and Type Identification} 
The honeypot example motivates us to investigate the condition under which public security policies elicit different actions from different types of users. 
The condition is useful for cyber attribution, i.e., tracing observable actions back to the user's private types. 
Since each security policy uniquely determines a posterior belief for a given generator, we define type identifiability concerning the posterior belief in Definition \ref{def:separable}. 

\begin{definition}[\textbf{Identifiable Types}]
\label{def:separable}
Two different types $l,h \in \{1,\cdots,M\}$ are identifiable under a posterior belief $\mathbf{p}\in \Delta \mathcal{X}$ 
if $ \exists i,j\in \{\textcolor{black}{DO},1,\cdots,K-1\}$ and $i\neq j$ such that $\mathbf{p}\in  \mathcal{C}_{i,j}^{l,h}$.
\end{definition}
The posterior beliefs under which two different types $l,h \in \{1,\cdots,M\}$ are identifiable constitute a belief region that may not be connected. 
\textcolor{black}{This belief region solely depends on the user's utility vector $\mathbf{\hat{v}}_U$ as the finest belief partition $\Delta \mathcal{X}=\cup_{a^1\in \mathcal{A},\cdots,a^M\in \mathcal{A}} \mathcal{C}_{\{a^1,\cdots,a^M\}}$ solely depends on $\mathbf{\hat{v}}_U$.} 
Intuitively, the size of the region is reduced as the utilities of the users of type $\theta_l$ and $\theta_h$ become better aligned. 
Definition \ref{def:perfectalignedUility} defines two extremes of utility alignment. 

\begin{definition}[\textbf{Completely (Mis)aligned Utilities}]
\label{def:perfectalignedUility}
Two different types of users have completely aligned (resp. misaligned) utilities, or equivalently zero (resp. full) utility misalignment, if they are unidentifiable (resp. identifiable) under all posterior belief $\mathbf{p}\in \Delta \mathcal{X}$.   
\end{definition}
If two utilities have the same (resp. opposite) values, then they are completely aligned (resp. misaligned). 
If two types of users' utilities are completely aligned (resp. misaligned), then the security policies that procure them to take different actions (resp. the same action) are not enforceable under any credible generators. 
 Proposition \ref{prop:alignment of users} shows that the results are translation- and scale-invariant. 
\begin{proposition}[\textbf{Alignment under Linear Dependence}]
\label{prop:alignment of users}
Consider linearly dependent utilities of two types $l,h \in \{1,\cdots,M\}$ of users; i.e., there exist a scaling factor $\rho_U^s(\theta_l,\theta_h)\in \mathbb{R}$ and translation factors $\rho_U^t(x, \theta_l,\theta_h) \in \mathbb{R}, \forall x\in \mathcal{X}$, such that  $\hat{v}_U(x,\theta_l,a)=\rho_U^s(\theta_l,\theta_h)\hat{v}_U(x,\theta_h,a)+\rho_U^t(x,\theta_l,\theta_h), \forall x\in \mathcal{X},a\in \mathcal{A}$. 
Two utilities are completely aligned (resp. misaligned) if and only if $\rho_U^s(\theta_l,\theta_h)\geq 0$ (resp. $<0$). 
\end{proposition}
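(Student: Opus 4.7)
The plan is to exploit the fact that an affine transformation with a positive slope preserves the $\arg\max$, whereas one with a negative slope converts it into an $\arg\min$. Substituting the linear-dependence relation into the expected-posterior-utility expression yields, for every $\mathbf{p}\in\Delta\mathcal{X}$ and every $a\in\mathcal{A}$,
\begin{equation*}
\mathbf{p}\,\mathbf{\hat{v}}_U(\theta_l,a)=\rho_U^s(\theta_l,\theta_h)\,\mathbf{p}\,\mathbf{\hat{v}}_U(\theta_h,a)+\sum_{n=1}^{N}p_n\,\rho_U^t(x_n,\theta_l,\theta_h),
\end{equation*}
where the additive term depends on $\mathbf{p}$ alone and is constant across actions. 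Stripping off this $a$-independent offset reduces the comparison of the two types' best responses to comparing a single rescaling of the type-$\theta_h$ expected utility.

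For the sufficiency direction I would split on the sign of $\rho_U^s(\theta_l,\theta_h)$. When $\rho_U^s>0$, invariance of the $\arg\max$ under positive affine transformations gives $\mathcal{C}^{\theta_l}_{a_i}=\mathcal{C}^{\theta_h}_{a_i}$ for every $a_i\in\mathcal{A}$, and hence $\mathcal{C}^{l,h}_{i,j}=\mathcal{C}^{\theta_h}_{a_i}\cap\mathcal{C}^{\theta_h}_{a_j}$, which for $i\neq j$ is confined to the indifference locus between two best-response regions of $\theta_h$; under a consistent tie-breaking rule the two types select the same action, so they are unidentifiable. The borderline $\rho_U^s=0$ case follows because $\mathbf{\hat{v}}_U(\cdot,\theta_l,a)$ then becomes independent of $a$, making $\theta_l$ indifferent and thus able to mimic $\theta_h$. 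When $\rho_U^s<0$, the negative rescaling turns the $\arg\max$ into an $\arg\min$, so at any belief where $\theta_h$ has a unique best action $a_j$, type $\theta_l$ selects some $a_i\neq a_j$, placing $\mathbf{p}\in\mathcal{C}^{l,h}_{i,j}$ with $i\neq j$ and certifying identifiability. Necessity then follows by contrapositive: each case excludes the opposite alignment property on the entire simplex.

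The main obstacle I anticipate is the treatment of ties at beliefs lying on the boundaries between the polytopes $\mathcal{C}^{\theta_h}_{a_i}$, where several actions are weakly optimal simultaneously. Because $\mathcal{C}^{\theta_l}_{a_i}$ is defined through weak inequalities, the sets $\mathcal{C}^{l,h}_{i,j}$ with $i\neq j$ can still intersect these lower-dimensional boundaries even in the aligned case; I would address this by noting that such beliefs lie in a measure-zero union of hyperplanes in $\Delta\mathcal{X}$ and by imposing a shared tie-breaking convention so that the two types' best-response selections remain consistent, which preserves unidentifiability throughout the interior and reduces the boundary case to a degenerate measure-zero exception.
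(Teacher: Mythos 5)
Your proposal is correct and follows essentially the same route as the paper's own proof: substitute the affine relation, observe that the translation term cancels in action-to-action differences, and conclude that a positive (resp.\ negative) scaling factor preserves (resp.\ reverses) the $\arg\max$, so the two types share (resp.\ never share) a unique best response. Your extra care about ties on the polytope boundaries goes beyond the paper, whose one-line argument silently ignores the weak-inequality overlap of the sets $\mathcal{C}_{a_i}^{\theta}$; that refinement is welcome but does not change the substance of the argument.
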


\begin{proof}
For any given $\mathbf{p}\in \Delta \mathcal{X}$ and $\theta_l\in \Theta$, there exists an action $a^*_i\in \mathcal{A}$ such that $\sum_{n=1}^N p_n \allowbreak
[\hat{v}_U(x_n,\theta_l,a^*_i)\allowbreak
-\hat{v}_U(x_n,\theta_l,a_k)]\geq 0, \allowbreak
\forall a_k\in\mathcal{A}$. 
Then, $ \rho_U^s(\theta_l,\theta_h) \sum_{n=1}^N p_n [\hat{v}_U(x_n,\theta_h,a^*_i)-\hat{v}_U(x_n,\theta_h,a_k)]\geq 0, \allowbreak
\forall a_k\in\mathcal{A}$, and the user of type $\theta_h\in \Theta$ at any posterior belief $\mathbf{p}$ has the same best-response action $a^*_i$ if and only if $\rho_U^s(\theta_l,\theta_h)\geq 0$. 
\end{proof}



\subsubsection{Characterization of the Optimal Generator}
\label{sec:convex hull}
Under a zero-information generator $\pi^0\in \Pi$, the user's posterior belief equals the prior belief $\mathbf{p}^0$ and we can rewrite  the user' best-response action $a^*_{\theta}(b_U^{\pi^0})$ in \eqref{eq:optimalAction} as $a_{\theta}^*(\mathbf{p}^0)$. 
Since variables $b_U,c$ are not designable in the benchmark case, we omit them in function $\bar{v}_D$ and rewrite the defender's expected posterior utility as $\bar{v}_D(\pi,\mathbf{p}^0)$. 
Since the users make decisions based on their prior beliefs, we refer to the expected posterior utility $\bar{v}_i$ of player $i\in \{D,U\}$ as his \textit{prior utility} $\tilde{v}_i$ when the generator contains zero information. 
In particular, the defender's prior utility $\tilde{v}_D$ is a function of  the prior belief $\mathbf{p}^0$, i.e., 
\begin{equation*}
    \tilde{v}_D(\mathbf{p}^0):=\bar{v}_D(\pi^0,\mathbf{p}^0)=\mathbb{E}_{x\sim \mathbf{p}^0} \mathbb{E}_{\theta\sim b_D(\cdot|x)} [ \hat{v}_D(x,\theta,a_{\theta}^*(\mathbf{p}^0)].  
\end{equation*}
\textcolor{black}{We obtain the piece-wise linear structure of the defender's prior utility $\tilde{v}_D$ in Proposition \ref{proposition:senderPWL}. The solid lines in Fig. \ref{fig:utilityhacking} illustrate $\tilde{v}_D$.} 

\begin{proposition}
\label{proposition:senderPWL}
The defender's prior utility $\tilde{v}_D$ is a (possibly discontinuous) piece-wise linear function of the common prior belief vector $\mathbf{p}^0\in \Delta\mathcal{X}$ with at most $\chi (K,M,N)$ pieces. 
\end{proposition}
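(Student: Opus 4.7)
The plan is to exploit the partition $\Delta \mathcal{X}=\cup_{a^1\in \mathcal{A},\cdots,a^M\in \mathcal{A}} \mathcal{C}_{\{a^1,\cdots,a^M\}}$ already established before Proposition \ref{proposition:DivisionRestricts}. On each polytope in this partition the best-response tuple $(a_{\theta_1}^*(\mathbf{p}^0),\ldots,a_{\theta_M}^*(\mathbf{p}^0))$ is constant (that is precisely the defining property of the regions $\mathcal{C}_{\{a^1,\cdots,a^M\}}$), and so on each such region $\tilde v_D$ reduces to a fixed linear functional of $\mathbf{p}^0$. The upper bound $\chi(K,M,N)$ on the number of pieces then comes for free from Proposition \ref{proposition:DivisionRestricts}.

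More concretely, I would first rewrite the prior utility by interchanging the two expectations and expanding the outer one as
\begin{equation*}
\tilde v_D(\mathbf{p}^0)=\sum_{n=1}^{N} p_n^0 \sum_{\theta_l\in\Theta} b_D(\theta_l\mid x_n)\,\hat v_D(x_n,\theta_l,a_{\theta_l}^*(\mathbf{p}^0)).
\end{equation*}
Next, I would fix any nonempty cell $\mathcal{C}_{\{a^1,\cdots,a^M\}}$ and restrict $\mathbf{p}^0$ to it. By the construction of the cell, $a_{\theta_l}^*(\mathbf{p}^0)=a^l$ for every $\mathbf{p}^0$ in the cell and every $l\in\{1,\ldots,M\}$, so the inner sum becomes a constant $\kappa_n(\{a^1,\ldots,a^M\}):=\sum_{\theta_l} b_D(\theta_l\mid x_n)\hat v_D(x_n,\theta_l,a^l)$ that does not depend on $\mathbf{p}^0$. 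Hence on this cell
\begin{equation*}
\tilde v_D(\mathbf{p}^0)=\sum_{n=1}^{N} \kappa_n(\{a^1,\ldots,a^M\})\, p_n^0,
\end{equation*}
which is linear (indeed affine, and in fact a pure linear functional on the simplex) in $\mathbf{p}^0$. Since the cells cover $\Delta\mathcal{X}$ and, by Proposition \ref{proposition:DivisionRestricts}, there are at most $\chi(K,M,N)$ of them with nonempty interior, $\tilde v_D$ is piecewise linear with at most $\chi(K,M,N)$ pieces.

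The only subtle point, and the main place I would be careful, is the possible \emph{discontinuity} on the relative boundaries between adjacent cells. On such a boundary several actions are simultaneously optimal for at least one type (a tie in \eqref{eq:optimalAction}), so $a^*_{\theta_l}(\cdot)$ is only a set-valued correspondence there; the actual selection may jump, and since $\hat v_D$ need not be aligned with $\hat v_U$, the value $\kappa_n$ carried across the boundary can change discretely. I would handle this by emphasizing that the claim is about \emph{piecewise} linearity with a fixed partition into relatively open cells, not about continuity: within each open cell the formula above applies verbatim, and what happens on the lower-dimensional boundaries is merely a matter of which tie-breaking rule is adopted for $a^*_{\theta_l}$. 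This yields the "possibly discontinuous" qualifier in the statement without affecting the bound on the number of pieces.
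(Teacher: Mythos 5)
Your proof is correct and follows essentially the same route as the paper's: restrict to each cell $\mathcal{C}_{\{a^1,\cdots,a^M\}}$ where the best-response tuple is constant so that $\tilde v_D$ is linear there, invoke Proposition \ref{proposition:DivisionRestricts} for the bound $\chi(K,M,N)$, and attribute the possible discontinuities to the fact that the partition is driven by $\hat v_U$ rather than $\hat v_D$. Your version simply spells out the expectation expansion and the tie-breaking issue on cell boundaries in more detail than the paper does.
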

\begin{proof}
\textcolor{black}{The piece-wise linear structure follows from the fact that $\tilde{v}_D$ is linear with respect to $\mathbf{p}^0$ inside each convex polytope $ \mathcal{C}_{\{a^1,\cdots,a^M\}}, \forall a^l\in \mathcal{A}, l\in \{1,\cdots,M\}$. As a result of Proposition \ref{proposition:DivisionRestricts}, the upper bound of the number of different convex polytopes is $\chi (K,M,N)$. 
Since the polytopes are determined based on the user's prior utility rather than the defender's, $\tilde{v}_D$ is possibly discontinuous at the boundaries of these polytopes.} 
\end{proof}

\begin{figure}[h]
\centering
\includegraphics[width=1\columnwidth]{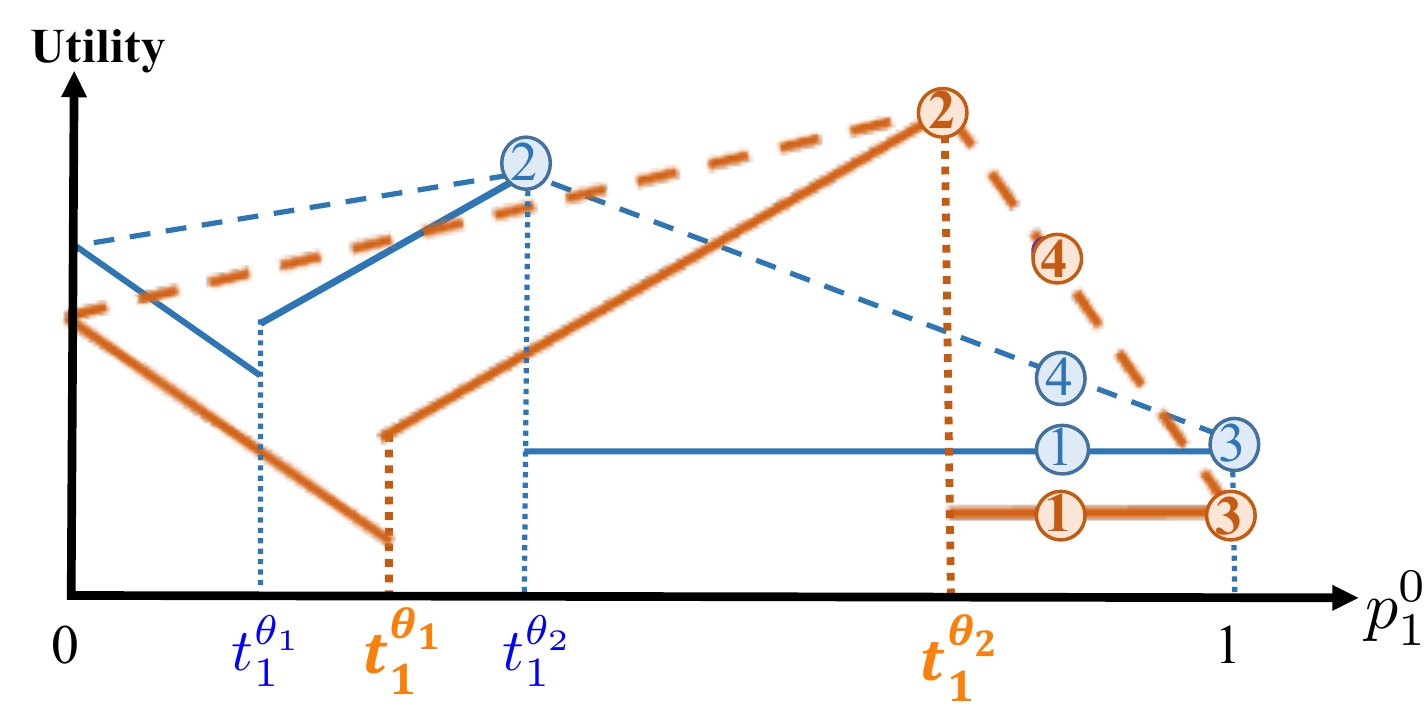}
\caption{ 
The defender's expected posterior utility versus prior belief $p_1^0$ 
with and without the modulator in orange and blue, respectively.  
\textcolor{black}{We denote orange lines and notations in bold.}
The solid lines indicate that the defender's prior utility $\tilde{v}_D$ is discontinuous and  piece-wise linear  under three belief regions, i.e.,  $[0,t_1^{\theta_1}],[t_1^{\theta_1},t_1^{\theta_2}]$, and $[t_1^{\theta_2},1]$.  
The dashed lines represent the defender's optimal posterior utility $V_D$. 
}
\label{fig:utilityhacking}
\end{figure}

The defender's expected posterior utility $\bar{v}_D$ is a function of $\pi\in \Pi$ and $\mathbf{p}^0\in \Delta\mathcal{X}$. 
Thus, the defender's optimal posterior utility $V_D(\mathbf{p}^0):=\sup_{\pi\in\Pi} \bar{v}_D(\pi,\mathbf{p}^0)$ is a function of $\mathbf{p}^0\in \Delta\mathcal{X}$. 
Based on Theorem \ref{thm:feasibleBounded}, there exists an optimal generator $\pi^*\in\Pi$ that achieves the optimal posterior utility, i.e., $V_D(\mathbf{p}^0)= \bar{v}_D(\pi^*,\mathbf{p}^0)=r$. 
Denote the convex hull of function $\tilde{v}_D$ as $co(\tilde{v}_D)$. 
Then, we can use the concavification technique introduced in \cite{aumann1995repeated, kamenica2011bayesian} to show that the defender's optimal posterior utility $V_D(\mathbf{p}^0)$ is the concave closure of her prior utility $\tilde{v}_D(\mathbf{p}^0)$ over the entire belief region $\mathbf{p}^0\in \Delta \mathcal{X}$, i.e., $V_D(\mathbf{p}^0)=\sup \{z\in \mathbb{R} | (\mathbf{p}^0,z)\in co(\tilde{v}_D)\}$. 



We visualize the concavification process under the binary state space $N=2$ in Fig. \ref{fig:utilityhacking}.  
 Suppose that there are two types of users and each type $\theta\in \{\theta_1,\theta_2\}$ has a single belief threshold denoted by $t_1^{\theta}$ \textcolor{black}{where $0<t_1^{\theta_1}<t_1^{\theta_2}<1$.}   
Consider a common prior belief $p^0_1\in [t_1^{\theta_2},1]$ denoted by node $1$'s abscissa. 
Then, the defender's prior utility $\tilde{v}_D(p_1^0)$ is denoted by node $1$'s ordinate. 
The defender can improve the utility from node $1$'s ordinate to at most node $4$'s ordinate by adopting the optimal generator $\pi^*\in \Pi$ as follows. 
Generator $\pi^*$ generates two signals $s_{2}\in \mathcal{S}$ and $s_{3}\in \mathcal{S}$ with proper probabilities under different states so that the user's posterior belief is node $2$'s abscissa when observing policy $s_{2}$ and node $3$'s abscissa when observing $s_{3}$. Based on the Bayesian plausibility condition \textcolor{black}{in Section \ref{sec:relation2BP},}  
the defender's optimal posterior utility $V_D(p_1^0)$ can be represented as the linear interpolation of the ordinates of nodes $2$ and $3$, i.e., node $4$'s ordinate. 
The same reasoning applies to all feasible common prior beliefs $p^0_1\in [0,1]$. 
Therefore, for all $[p_1^0,1-p_1^0] \in \Delta \mathcal{X}$, the defender's optimal posterior utility $V_D(\mathbf{p}^0)$ is the concave closure of her prior utility $\tilde{v}_D(\mathbf{p}^0)$ and $V_D(\mathbf{p}^0) \geq \tilde{v}_D(\mathbf{p}^0)$. 

Although we need at least $|\mathcal{S}|=K^M$ security policies to represent all the permutations of actions under different types, Fig. \ref{fig:utilityhacking} shows that the defender can achieve her optimal posterior utility by generating two different security policies with proper probabilities when $N=2$. 
Proposition \ref{proposition: OnlyTwoSignalSent} generalizes the result to $N>2$ and shows that the generator only needs to generate a small number of security policies to achieve her optimal posterior utility. 
If $\tilde{v}_D(\mathbf{p}^0)=V_D(\mathbf{p}^0)$ and $\mathbf{p}^0$ is further an interior point of any
convex polytope  $\mathcal{C}_{\{a^1,\cdots,a^M\}}, \forall a^l\in\mathcal{A},l\in\{1,\cdots,M\}$, then there exist infinitely many credible generators that achieve $V_D(\mathbf{p}^0)$. 

\begin{proposition}[\bf{Efficiency of the Optimal Generator}]
\label{proposition: OnlyTwoSignalSent}
For any DG with common prior belief $\mathbf{p}^0\in \Delta \mathcal{X}$, there exist either one or infinitely many optimal generators to achieve the optimal posterior utility  $V_D(\mathbf{p}^0)$. 
For each state $x\in \mathcal{X}$, there exists one optimal generator $\pi^*(\cdot|x)\in \Delta \mathcal{S}$ that generates at least $K^M-N$ security policies with zero probability. 
\end{proposition}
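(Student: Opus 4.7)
The plan is to recast the search for an optimal generator as a Bayesian-plausible splitting problem and then apply a Carath\'eodory / linear-programming argument to both the sparsity of the support and the count of maximizers, using the concave-closure characterization $V_D(\mathbf{p}^0)=\sup\{z\in\mathbb{R}\mid (\mathbf{p}^0,z)\in co(\tilde{v}_D)\}$ from Section \ref{subsection:benchmark}.

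First, I would exploit the bijection between credible generators and Bayesian-plausible splittings of the prior. Any generator $\pi\in\Pi$ induces posteriors $\mathbf{p}_s$ via \eqref{eq:BayesUpdate} with marginal weights $\alpha_s=\sum_{x\in\mathcal{X}} p^0_x\pi(s|x)$, and conversely any collection $\{(\alpha_s,\mathbf{p}_s)\}_{s\in\mathcal{S}}$ satisfying $\alpha_s\geq 0$, $\sum_s\alpha_s=1$, and $\sum_s\alpha_s\mathbf{p}_s=\mathbf{p}^0$ reconstructs a generator via $\pi(s|x)=\alpha_s\mathbf{p}_s(x)/p^0_x$. By the aggregation argued in Section \ref{sec:MathProg}, I may restrict attention to $|\mathcal{S}|=K^M$ security policies, each attached to a convex polytope $\mathcal{C}_{\{a^1,\ldots,a^M\}}$ on which $\tilde{v}_D$ is affine by Proposition \ref{proposition:senderPWL}.

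Next, I would fix an optimal collection of posteriors $\{\mathbf{p}_s\}_{s\in\mathcal{S}}$, each residing in its designated polytope, and treat only the marginal weights $\{\alpha_s\}_{s\in\mathcal{S}}$ as variables. Since $\tilde{v}_D$ is affine on each polytope, the objective $\sum_s\alpha_s\tilde{v}_D(\mathbf{p}_s)$ is linear in $(\alpha_s)_{s}$, and the feasible region is the intersection of the nonnegativity orthant with exactly $N$ independent equality constraints: $N-1$ from matching $\mathbf{p}^0$ componentwise (one component is redundant because both $\mathbf{p}^0$ and every $\mathbf{p}_s$ lie in $\Delta\mathcal{X}$) plus the single equation $\sum_s\alpha_s=1$. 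Standard LP theory then yields a basic feasible optimal solution with at most $N$ nonzero components. Reconstructing through $\pi^*(s|x)=\alpha_s\mathbf{p}_s(x)/p^0_x$ forces $\pi^*(s|x)=0$ for every $x\in\mathcal{X}$ whenever $\alpha_s=0$, delivering at least $K^M-N$ security policies with zero probability in each $\pi^*(\cdot|x)$.

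For the dichotomy between one and infinitely many optimal generators, I would invoke convexity of the optimal set. If $\pi_1^*$ and $\pi_2^*$ are two distinct optimal generators, then for $\lambda\in(0,1)$ the generator $\lambda\pi_1^*+(1-\lambda)\pi_2^*$ still lies in $\Pi$; moreover, since aggregation pins each security policy to a single polytope on which $\tilde{v}_D$ is affine, the objective is affine along this convex combination and the IC constraints remain satisfied. Hence the value is preserved, producing a continuum of optima. Together with the trivial possibility that the basic LP solution is unique, this forces the cardinality of the optimal set to be either one or uncountably infinite.

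The main technical care point will be invoking the aggregation \emph{before} claiming linearity: since $\tilde{v}_D$ may be discontinuous across the boundaries of the polytopes $\mathcal{C}_{\{a^1,\ldots,a^M\}}$, the objective is only piecewise linear on $\Delta\mathcal{X}$, so a naive LP formulation would fail. The aggregation into $K^M$ signal categories attaches each signal to a single polytope, on which Proposition \ref{proposition:senderPWL} supplies the linearity needed to activate the LP/Carath\'eodory argument; a secondary but routine check is to verify that when $p^0_x=0$ for some state, the reconstructed $\pi^*(s|x)$ can be set arbitrarily (in particular to $0$) without affecting either the marginal distribution of signals or the defender's expected posterior utility.
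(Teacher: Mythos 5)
Your proposal is correct and follows essentially the same route as the paper: the one-or-infinitely-many dichotomy comes from the fact that COP in the benchmark case is a linear program (so the optimal set is a face of a polytope), and the at-most-$N$ support bound is a Carath\'eodory-type count over the $(N-1)$-simplex, which the paper states geometrically via triangulation of the concave closure into simplicial pieces with $N$ vertices and which you make explicit by fixing the optimal posteriors and extracting a basic feasible solution of the resulting weight LP. Your version is a more rigorous rendering of the same argument (and correctly flags the $p^0_x=0$ edge case), but it is not a different approach.
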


\begin{proof}
Since COP under the benchmark case is a linear program, the optimal solution is either unique or innumerable. 
If $N=2$, the convex hull consists of pieces of line segments where each line segment can be determined uniquely by its two endpoints. 
If $N=3$, the convex hull as a polygon consists of finite pieces of triangles  where each triangle can be determined uniquely by its three endpoints. 
We can extend to any finite $N$ where the convex hull consists of pieces of $(N-1)$-simplex where each piece can be determined uniquely by $N$ endpoints. 
Thus, for any $\mathbf{p}^0\in \Delta\mathcal{X}$, it requires at most $N$ points to achieve $V_D(\mathbf{p}^0)$, which corresponds to $N$ distinct security policies. 
\end{proof}

\begin{remark} 
Proposition \ref{proposition: OnlyTwoSignalSent}  shows that the defender does not need to apply all enforceable security policies to achieve the optimal posterior utility; i.e., the optimal generator is efficient and generates at most $N$ security policies for each state $x\in \mathcal{X}$.
\label{remark:onlyTwoSignal}
\end{remark}


We define the trust margin under a credible generator $\pi\in \Pi$ in Definition \ref{def:trust margin}. 
 The maximum trust margin is achieved when the optimal generator $\pi^*\in \Pi$ is applied. 
 The trust margin can be negative if generator $\pi$ is not well designed. 
 However, the maximum trust margin is non-negative as it is the difference between the defender's optimal posterior utility and prior utilities, i.e.,  $V_D(\mathbf{p}^0)-\tilde{v}_D(\mathbf{p}^0)$. 
 Based on whether the maximum trust margin is zero or positive, Definition \ref{def:manageable user} defines the user to be unmanageable or manageable. 

\begin{definition}[\textbf{Trust Margin}]
\label{def:trust margin}
We define $\bar{v}_D(\pi,\mathbf{p}^0)-\tilde{v}_D(\mathbf{p}^0)$ as the  trust margin under the common prior belief $\mathbf{p}^0\in \Delta \mathcal{X}$ and a credible generator $\pi\in \Pi$. 
\end{definition}
\begin{definition}[\textbf{Manageability}]
\label{def:manageable user}
The user is manageable (resp. unmanageable) under prior belief $\mathbf{p}^0$ if the maximum trust margin is greater than (resp. equals) zero. 
\end{definition}
Intuitively, a user is manageable if he shares the same utility with the defender but unmanageable if he has an opposite utility. 
We introduce $\rho^s_D\in \mathbb{R}$ to represent the user's level of maliciousness. 
Theorem \ref{thm:Alignment} investigates how the user's level of maliciousness affects his manageability. 


\begin{theorem}[\textbf{Manageability and Level of maliciousness}]
\label{thm:Alignment}
Let the common prior belief be state-independent, i.e., $b_D(\theta|x) \allowbreak
=\hat{b}_D(\theta), \forall \theta\in \Theta, \forall x\in \mathcal{X}$, and two players' utilities be linearly dependent, i.e., there exist a scaling factor $\rho^s_D \in \mathbb{R}$ and translation factors $\rho^t_D(x, \theta) \in \mathbb{R}$, such that 
$
\hat{v}_D(x,\theta,a)=\rho^s_D \hat{v}_U(x,\theta,a)+\rho^t_D(x, \theta), \forall  x\in \mathcal{X}, \theta\in \Theta, a \in \mathcal{A}$. 
Then, the following two statements hold. 

\begin{itemize}
\item[(a)]  
The defender's trust margin is zero for all $\mathbf{p}^0\in \Delta \mathcal{X}$ and credible generators if and only if $\rho^s_D\leq 0$. The optimal generator contains zero information. 
\item[(b)] 
The defender's trust margin is non-negative for all $\mathbf{p}^0\in \Delta \mathcal{X}$ and credible generators if and only if $\rho^s_D>0$. Moreover, the optimal generator contains full information. 
If  $\mathbf{p}^0$ is an interior point of the $(N-1)$-simplex and there exists at least one $\theta\in \Theta$ under which  no actions dominate, then the defender's trust margin is positive. 
\end{itemize}
\end{theorem}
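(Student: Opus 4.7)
The plan is to reduce both parts of the theorem to a Jensen-type inequality on the user's expected posterior utility and then propagate the sign of that inequality to the defender's trust margin via the scalar $\rho^s_D$. Using state-independence of $b_D$ to pull the expectation over $\theta$ outside, and substituting $\hat{v}_D=\rho^s_D\hat{v}_U+\rho^t_D$, the translation term $\rho^t_D(x,\theta)$ is action-independent and therefore contributes a generator-independent constant to both $\bar{v}_D(\pi,\mathbf{p}^0)$ and $\tilde{v}_D(\mathbf{p}^0)$. Writing $U_\theta(\mathbf{p}):=\max_{a\in\mathcal{A}}\mathbf{p}\mathbf{\hat{v}}_U(\theta,a)$, the trust margin collapses to
\begin{equation*}
\bar{v}_D(\pi,\mathbf{p}^0)-\tilde{v}_D(\mathbf{p}^0) = \rho^s_D \sum_{\theta\in\Theta}\hat{b}_D(\theta)\bigl[\mathbb{E}_s U_\theta(b_U^\pi(\cdot|\theta,s)) - U_\theta(\mathbf{p}^0)\bigr].
\end{equation*}
Under the overt manipulator assumed throughout Section~\ref{sec:Concavification}, $b_U(\cdot|\theta)=b$, so Lemma~\ref{lemma:Bayesian plausibility} yields Bayesian plausibility $\mathbb{E}_s[b_U^\pi(\cdot|\theta,s)]=\mathbf{p}^0$, and Proposition~\ref{proposition:PWLC} gives convexity of $U_\theta$. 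Jensen's inequality therefore makes the bracketed quantity non-negative for every credible $\pi$, with equality at $\pi=\pi^0$.

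For part (a), if $\rho^s_D\leq 0$ the display expresses the trust margin as a non-positive multiple of a non-negative quantity; the zero-information generator attains zero, so the optimal trust margin $V_D(\mathbf{p}^0)-\tilde{v}_D(\mathbf{p}^0)$ is identically zero and $\pi^0$ is optimal. For the converse direction I would invoke part (b) to exhibit a prior and a credible generator producing strictly positive trust margin, contradicting the hypothesis. For part (b), if $\rho^s_D>0$ the display immediately gives non-negative trust margin. To identify the full-information generator as optimal, observe that the concave closure of the convex $U_\theta$ over the $(N-1)$-simplex coincides with the linear interpolation of its vertex values $\sum_n p^0_n\max_a \hat{v}_U(x_n,\theta,a)$, and this is exactly what full information delivers because its posteriors sit at the simplex vertices; since this simultaneously maximizes $\mathbb{E}_s U_\theta$ for every $\theta$, it maximizes the $\hat{b}_D$-weighted sum. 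Finally, when some type admits no dominant action, $U_\theta$ has at least two distinct affine pieces, so at any interior $\mathbf{p}^0$ it lies strictly below its concave closure, giving a strictly positive Jensen gap under $\pi^*$ and hence a strictly positive trust margin.

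The main obstacle is a careful interpretation of the phrase ``trust margin is zero for all $\mathbf{p}^0$ and credible generators'' in (a): taken literally this fails for $\rho^s_D<0$ with informative generators (which produce strictly negative margins), so I would read it as asserting that the \emph{optimal} trust margin $V_D-\tilde{v}_D$ is zero, consistent with the companion assertion that $\pi^0$ is optimal. A secondary care point is the potentially degenerate case in which every type admits a dominant action: here $U_\theta$ is globally affine, the Jensen gap vanishes, and the trust margin is zero regardless of $\rho^s_D$; the non-degeneracy hypothesis in the last clause of (b) is exactly what rules this out and is what I would use to discharge the converse direction of (a).
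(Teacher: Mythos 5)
Your proof is correct and follows essentially the same route as the paper: both reduce the defender's utility to $\rho^s_D$ times the user's PWLC expected posterior utility (Proposition~\ref{proposition:PWLC}) plus a generator-independent term, and then let the sign of $\rho^s_D$ decide whether the Jensen gap --- equivalently, the gap to the concave closure used in the paper's concavification argument --- helps or hurts the defender. Your two caveats, namely that statement (a) must be read as a claim about the \emph{maximum} trust margin and that the ``only if'' directions tacitly require some type to lack a dominant action, are both points the paper's own proof glosses over in exactly the way you resolve them.
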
 



\begin{proof}
Under the given conditions, 
$
 \tilde{v}_D(\mathbf{p}^0)   
= 
\mathbb{E}_{\theta\sim \hat{b}_D} \allowbreak
\mathbb{E}_{x\sim \mathbf{p}^0} \allowbreak
[ \rho^s_D \hat{v}_U(x,\theta,a_{\theta}^{*}(\mathbf{p}^0))
+\rho^t_D (x, \theta)]
=\rho^s_D \mathbb{E}_{\theta\sim \hat{b}_D} \allowbreak
\mathbb{E}_{x\sim \mathbf{p}^0} \allowbreak
[  \hat{v}_U(x,\theta,a_{\theta}^{*}(\mathbf{p}^0))]
+\mathbb{E}_{\theta\sim \hat{b}_D} \allowbreak
\mathbb{E}_{x\sim \mathbf{p}^0}\allowbreak
[\rho^t_D(x, \theta)]. 
$
Proposition \ref{proposition:PWLC} has shown that $\mathbb{E}_{x\sim \mathbf{p}^0} [  \hat{v}_U(x,\theta,a_{\theta}^*(\mathbf{p}^0))]$ is a PWLC function of $ \mathbf{p}^0$ for each $\theta\in \Theta$.  Since $\hat{b}_D(\theta)\geq 0, \forall \theta\in \Theta$, the linear combination $\mathbb{E}_{\theta\sim \hat{b}_D}   \mathbb{E}_{x\sim \mathbf{p}^0} [  \hat{v}_U(x,\theta,a_{\theta}^*(\mathbf{p}^0))]$ is also PWLC. 
The term $\mathbb{E}_{\theta\sim \hat{b}_D}  \mathbb{E}_{x\sim \mathbf{p}^0} [\rho^t_D (x, \theta)]$ is a linear function of $ \mathbf{p}^0$. 
Thus, $\tilde{v}_D$ is a piece-wise linear and concave (resp. linear) function of $ \mathbf{p}^0$ if and only if $\rho^s_D<0$ (resp. $\rho^s_D=0$). 
If  $\tilde{v}_D$ is concave or linear over the entire belief region $ \Delta \mathcal{X}$, its convex hull is itself. 
Thus, $V_D(\mathbf{p}^0)=\tilde{v}_D(\mathbf{p}^0)$ for all $ \mathbf{p}^0\in \Delta \mathcal{X}$ and any zero-information generator is optimal. 
Similarly, $\tilde{v}_D$ is PWLC  if and only if $\rho^s_D>0$, and any full-information generator is optimal. 
If there exists at least one $\theta\in \Theta$ under which no actions dominate, then $\tilde{v}_D$ is strictly convex over  the entire belief region. 
Thus, we have $V_D(\mathbf{p}^0)<\tilde{v}_D(\mathbf{p}^0)$ when $\mathbf{p}^0$ is an interior point of the $(N-1)$-simplex. 
\end{proof}

Theorem \ref{thm:Alignment} shows that when two players' utilities are linearly dependent, the user's manageability depends on the sign of the scaling factor $\rho^s_D$ rather than its value. 
Thus, the user's level of maliciousness has a threshold impact on the manageability and the threshold is $0$. 

\subsection{Incentive Modulator and Trust Manipulator}
\label{subsec:Modulator}
We illustrate the modulator design and the manipulator design
in Section \ref{subsec:ModulatorandGenerator} and \ref{subsec:Manipulatorandgenerator}, respectively. 
The GMM mechanism design is presented in Section \ref{sec:all together}. 
\subsubsection{Joint Design of Generator and Modulator}
\label{subsec:ModulatorandGenerator}
The modulator incentivizes unmanageable users and increases the security and efficiency of the networks. 
Under the binary state $N=2$, Fig. \ref{fig:utilityhacking} illustrates the defender's prior utility with the modulator in orange solid lines.  
The orange solid lines are different from the blue ones in two folds.  
From the user's perspective, the modulator changes the user's expected utility under different actions and thus  results in translations of the dashed lines in Fig. \ref{fig:binaryPWLC}. Those translations change the belief region partition, e.g., the right shifts of $t_1^{\theta_1}$ and $t_1^{\theta_2}$ in Fig. \ref{fig:utilityhacking}. 
From the defender's perspective, the modulator modifies her utility in each new belief regions, and the value of the modification is $\mathbb{E}_{x\sim \mathbf{p}^0}\mathbb{E}_{\theta\sim b_D(\cdot|x)}[\gamma c(a_{\theta}^*(\mathbf{p}^0))]$. 
If the defender's belief is independent of state, i.e., $b_D(\theta|x)=\hat{b}(\theta), \forall \theta\in \Theta, \forall x\in \mathcal{X}$, then the defender's utility change $\mathbb{E}_{x\sim \mathbf{p}^0}\mathbb{E}_{\theta\sim b_D(\cdot|x)}[\gamma c(a_{\theta}^*(\mathbf{p}^0))]=\gamma \mathbb{E}_{\theta\sim \hat{b}_D(\cdot)}[ c(a_{\theta}^*(\mathbf{p}^0))]$ is a  constant with respect to $\mathbf{p}^0$ in each new belief region. 
When the state space is binary as shown in Fig. \ref{fig:utilityhacking}, it means that designing $c$  introduces translations but not rotations to each segment of the function $\tilde{v}_D$. 

The joint design of the modulator and the generator results in the new convex hull denoted by the dashed blue lines in Fig. \ref{fig:utilityhacking}.  
Based on both players' perspectives, the optimal design needs to strike a balance between incentivizing users to change their belief region partitions and the costs to provide the incentives.  
Take Fig. \ref{fig:utilityhacking} as an example, we observe that the modulator incurs costs to the defender for all actions, i.e.,  $c(a)\leq 0, \forall a\in \mathcal{A}$. 
Thus, in all three belief regions, the defender's prior utilities with the modulator, represented by the solid orange lines, are lower than the ones without the modulator, represented by the solid blue lines. 
However, the benefit of the user's incentive change outweighs the costs; i.e., the defender's optimal posterior utility $V_D(p_1^0)$ increases from node  $4$ in blue to node $4$ in orange. 
\subsubsection{Joint Design of Generator and Manipulator}
\label{subsec:Manipulatorandgenerator}
The manipulator directly distorts the user’s prior belief to elicit desirable behaviors.
When the generator cannot be designed, the manipulator design is equivalent to the process of finding the initial belief $\mathbf{p}^0_g:=arg\max_{\mathbf{p}^0\in  \Delta \mathcal{X}}\tilde{v}_D(\mathbf{p}^0)$ that achieves the global maximum of the prior utility $\tilde{v}_D$. 
Proposition \ref{proposition:optimal distorted belief} proves the existence of the optimal distorted belief $\mathbf{p}^0_g$. 
\begin{proposition}
\label{proposition:optimal distorted belief}
For any given $\hat{v}_D,\hat{v}_U$ of two players, there exists an initial belief $\mathbf{p}^0_g\in \Delta \mathcal{X}$ at the boundary of the convex polytopes $ \mathcal{C}_{\{a^1,\cdots,a^M\}}, \forall a^l\in \mathcal{A}, l\in \{1,\cdots,M\}$, 
such that $\mathbf{p}^0_g=arg\max_{\mathbf{p}^0\in  \Delta \mathcal{X}}\tilde{v}_D(\mathbf{p}^0)$. 
\end{proposition}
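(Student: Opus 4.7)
The plan is to exploit the piece-wise linear structure of $\tilde{v}_D$ established in Proposition \ref{proposition:senderPWL}. The belief simplex $\Delta\mathcal{X}$ is covered by the finite family of closed convex polytopes $\mathcal{C}_{\{a^1,\dots,a^M\}}$ defined by the weak linear inequalities that encode the user's incentive compatibility, and by Proposition \ref{proposition:DivisionRestricts} there are at most $\chi(K,M,N)$ of them. On the closure of each polytope the action tuple $(a^1,\dots,a^M)$ remains a valid best response for the respective types, so $\tilde{v}_D$ coincides there with the affine function $\ell_{\{a^1,\dots,a^M\}}(\mathbf{p}^0) = \mathbb{E}_{x\sim \mathbf{p}^0}\mathbb{E}_{\theta\sim b_D(\cdot|x)}[\hat{v}_D(x,\theta,a^l)]$ of the prior belief vector.

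First I would analyze each closed polytope in isolation and invoke the fundamental theorem of linear programming: a linear function over a non-empty compact convex polytope attains its maximum at one of the polytope's vertices. Each such vertex is by definition a point where at least one of the defining weak inequalities binds, i.e., it lies on the boundary of $\mathcal{C}_{\{a^1,\dots,a^M\}}$ (possibly coinciding with a vertex of the enclosing simplex $\Delta\mathcal{X}$).

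Next I would aggregate across the finite partition. Let $\mathcal{V}$ denote the union of the vertex sets of all polytopes in the partition; $\mathcal{V}$ is finite and contained in $\bigcup_{\{a^1,\dots,a^M\}} \partial \mathcal{C}_{\{a^1,\dots,a^M\}}$. Adopting the standard Bayesian-persuasion convention that ties in the user's best response are broken in the defender's favor, so that $\tilde{v}_D$ is upper semicontinuous on the compact set $\Delta\mathcal{X}$, one obtains
\begin{equation*}
\max_{\mathbf{p}^0 \in \Delta\mathcal{X}} \tilde{v}_D(\mathbf{p}^0)
= \max_{\{a^1,\dots,a^M\}} \max_{\mathbf{p}^0 \in \mathcal{C}_{\{a^1,\dots,a^M\}}} \ell_{\{a^1,\dots,a^M\}}(\mathbf{p}^0)
= \max_{\mathbf{p}^0 \in \mathcal{V}} \tilde{v}_D(\mathbf{p}^0),
\end{equation*}
so a maximizer $\mathbf{p}^0_g$ can always be picked from $\mathcal{V}$, which lies on a polytope boundary as required.

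The main obstacle is the potential discontinuity of $\tilde{v}_D$ at the shared faces between adjacent polytopes, already flagged in Proposition \ref{proposition:senderPWL}. Without the tie-breaking convention, the supremum along a sequence approaching such a face from the ``better'' side could fail to be attained at the face itself. Justifying this convention, which is natural because the defender acts as the Stackelberg leader in the GMM mechanism and can commit to a specific action selection at indifference points, is the delicate step; once it is in place, the linear-programming argument on each polytope and the finiteness of $\mathcal{V}$ immediately yield both the existence of $\mathbf{p}^0_g$ and its boundary location.
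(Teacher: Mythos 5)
Your proposal is correct and follows essentially the same route as the paper: both rest on the piece-wise linearity of $\tilde{v}_D$ from Proposition \ref{proposition:senderPWL} and the observation that an affine function on a closed convex polytope attains its maximum at a vertex, hence on the polytope boundary. The one place you go beyond the paper is in explicitly flagging that the discontinuity of $\tilde{v}_D$ across shared faces requires a tie-breaking (upper-semicontinuity) convention for the supremum to actually be attained — the paper's proof simply invokes Theorem \ref{thm:feasibleBounded} for existence and finiteness and does not address this attainment subtlety.
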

\begin{proof}
For each $\hat{v}_D,\hat{v}_U$, the global maximum  $\tilde{v}_D(\mathbf{p}^0_g)=\max_{\mathbf{p}^0\in  \Delta \mathcal{X}}\tilde{v}_D(\mathbf{p}^0)$ exists and has a finite value due to Theorem \ref{thm:feasibleBounded}. 
Proposition \ref{proposition: OnlyTwoSignalSent} shows that the global maximum is either unique or infinite. In either case, at least one global maximum is at the boundary of the convex polytopes due to the piece-wise linear property stated in Proposition \ref{proposition:senderPWL}.  
\end{proof}

When the optimal generator is applied, the joint design of the manipulator and the generator is equivalent to the process of finding the initial belief $\mathbf{\bar{p}}^0_g:=arg\max_{\mathbf{p}^0\in  \Delta \mathcal{X}}V_D(\mathbf{p}^0)$ that achieves the global maximum of $V_D$. 
Based on the  piece-wise linear property of $\tilde{v}_D$ in Proposition \ref{proposition:senderPWL}, 
the prior utility $\tilde{v}_D$ and its concave closure $V_D$ share the same global maximum. 
Thus, $\mathbf{p}^0_g=\mathbf{\bar{p}}^0_g$ and the optimal generator contains zero information. 
Take Fig. \ref{fig:utilityhacking} as an example, $\mathbf{p}^0_g=[t_1^{\theta_2},1-t_1^{\theta_2}]$ achieves the global maximum denoted by node $2$'s ordinate, and node $2$ is on both the solid and the dashed lines. 
These results are summarized in Theorem \ref{thm:InitialbeliefManipulation}. 
\begin{theorem}
\label{thm:InitialbeliefManipulation}
The design of optimal overt manipulator changes the common initial belief $\mathbf{p}^0$ into $\mathbf{p}^0_g=\mathbf{\bar{p}}^0_g$. 
The defender's  optimal posterior utility has the value of $\tilde{v}_D(\mathbf{p}^0_g)=V_D(\mathbf{p}^0_g)$ and is independent of the initial belief $\mathbf{p}^0\in \Delta \mathcal{X}$.  
In the joint design of the overt manipulator and the generator, the optimal generator contains zero information. 
\end{theorem}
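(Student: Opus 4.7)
The plan is to exploit two facts: (i) under an overt manipulator the defender can freely set the common prior belief to any point $\mathbf{p}^0 \in \Delta\mathcal{X}$, and (ii) by the concavification characterization in Section \ref{sec:convex hull}, $V_D$ is the concave closure of $\tilde{v}_D$ on the simplex. The critical structural fact I would push is the well-known property that a function and its concave closure share the same global maximum on a convex domain: since every point of $co(\tilde{v}_D)$ lies below some convex combination of values of $\tilde{v}_D$, we have $\sup_{\mathbf{p}^0\in\Delta\mathcal{X}} V_D(\mathbf{p}^0) = \sup_{\mathbf{p}^0\in\Delta\mathcal{X}} \tilde{v}_D(\mathbf{p}^0)$, and this supremum is attained (a maximum, not just a supremum) by Proposition \ref{proposition:optimal distorted belief}.

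First, I would fix notation: let $\mathbf{p}^0_g$ attain $\max_{\mathbf{p}^0} \tilde{v}_D(\mathbf{p}^0)$ (which exists by Proposition \ref{proposition:optimal distorted belief}), and let $\bar{\mathbf{p}}^0_g$ attain $\max_{\mathbf{p}^0} V_D(\mathbf{p}^0)$. From $V_D \geq \tilde{v}_D$ pointwise, we get $V_D(\bar{\mathbf{p}}^0_g) \geq \tilde{v}_D(\mathbf{p}^0_g)$. For the reverse inequality, I would argue that at any $\mathbf{p}^0$ the value $V_D(\mathbf{p}^0)$ is a convex combination $\sum_k \lambda_k \tilde{v}_D(\mathbf{p}^k)$ for some $\mathbf{p}^k \in \Delta\mathcal{X}$ with $\sum_k \lambda_k \mathbf{p}^k = \mathbf{p}^0$ (Bayesian plausibility combined with Proposition \ref{proposition: OnlyTwoSignalSent}), so $V_D(\mathbf{p}^0) \leq \max_k \tilde{v}_D(\mathbf{p}^k) \leq \tilde{v}_D(\mathbf{p}^0_g)$. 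Chaining these inequalities yields $V_D(\bar{\mathbf{p}}^0_g) = \tilde{v}_D(\mathbf{p}^0_g)$, and evaluating at $\mathbf{p}^0_g$ gives $\tilde{v}_D(\mathbf{p}^0_g) \leq V_D(\mathbf{p}^0_g) \leq V_D(\bar{\mathbf{p}}^0_g) = \tilde{v}_D(\mathbf{p}^0_g)$, forcing equality $V_D(\mathbf{p}^0_g) = \tilde{v}_D(\mathbf{p}^0_g)$ and allowing the selection $\mathbf{p}^0_g = \bar{\mathbf{p}}^0_g$.

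Having established the scalar equality, the three conclusions then follow cleanly. Independence of the original initial belief is immediate: regardless of the starting $\mathbf{p}^0$, the overt manipulator can overwrite it to $\mathbf{p}^0_g$, so the achievable optimal value is the constant $\tilde{v}_D(\mathbf{p}^0_g)$. For the zero-information generator claim, I would invoke the concavification picture: at the point $\mathbf{p}^0_g$ the concave closure coincides with $\tilde{v}_D$ itself, so no Bayesian splitting can strictly increase the expected utility; hence $\pi^0$ is optimal (this also matches Definition \ref{def:manageable user} and the benchmark analysis showing no trust margin exists at a global maximum of $\tilde{v}_D$).

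The main obstacle I anticipate is the clean derivation of the ``maximum preserving'' property of the concave closure under the Bayesian-plausibility constraint; specifically, verifying that every value attained by $V_D$ is a genuine convex combination of $\tilde{v}_D$ values at feasible posteriors, rather than just an abstract concave majorant. Proposition \ref{proposition: OnlyTwoSignalSent} gives exactly this representation via at most $N$ posteriors with weights summing to one and averaging to $\mathbf{p}^0$, so the obstacle is really a bookkeeping issue rather than a genuine difficulty. A minor subtlety is non-uniqueness of $\mathbf{p}^0_g$ and $\bar{\mathbf{p}}^0_g$: they are sets in general, and the statement should be read as the existence of a common maximizer, which is exactly what the chain of inequalities above supplies.
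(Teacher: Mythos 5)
Your proposal is correct and follows essentially the same route as the paper: both arguments rest on $V_D$ being the concave closure of $\tilde{v}_D$, the fact that a function and its concave closure share the same global maximum over $\Delta\mathcal{X}$, and the observation that at the common maximizer $\mathbf{p}^0_g$ no Bayesian splitting can improve on the zero-information generator. The only difference is cosmetic: the paper justifies the shared-maximum step by appealing to the piece-wise linearity of $\tilde{v}_D$ (Propositions \ref{proposition:senderPWL} and \ref{proposition:optimal distorted belief}), whereas you derive it from the general convex-combination representation of $V_D$ via Bayesian plausibility and Proposition \ref{proposition: OnlyTwoSignalSent}, which needs only attainment of $\max\tilde{v}_D$ and is, if anything, slightly more robust.
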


\subsubsection{Design of the GMM Mechanism}
\label{sec:all together}
We incorporate the modulator design into the joint design of the generator and the manipulator to complete the GMM mechanism design. 
Based on the analysis in Section \ref{subsec:Manipulatorandgenerator}, 
the first step of the GMM design is to determine the optimal modulator $c^*\in \mathcal{C}$ that results in the prior utility function with the largest value of the global maximum, i.e., $c^*=arg\max_{c}[\max_{\mathbf{p}^0\in  \Delta \mathcal{X}}\tilde{v}_D(\mathbf{p}^0)]$.  
With the given modulator $c^*$, the second step of the design is to reduce the problem to the joint design of modulator and manipulator presented in Section \ref{subsec:Manipulatorandgenerator}. 
\begin{remark}[\bf{Separation Principle}] 
\label{remark: Separable Design}
The two-step design of the GMM mechanism shows that the defender can 
design the optimal modulator $c^*\in \mathcal{C}$ independently. 
\end{remark}
\textcolor{black}{
We identify the \textit{equivalence principle} in Remark \ref{remark: Local and Global } based on the results in Theorem \ref{thm:InitialbeliefManipulation}.}  
If the overt manipulator allows the defender to manipulate the initial belief arbitrarily, then the optimal generator contains zero information; i.e., the defender no longer needs the optimal generator to achieve her optimal posterior utility. 
\textcolor{black}{
Note that the equivalence principle does not mean that the generator is redundant. 
When the belief manipulation is not arbitrary and under practical constraints (e.g., the belief changes within a limited range), the joint design of the two components can yield better performance than the single design of the manipulator.} 

\begin{remark}[\bf{Equivalence Principle}] 
\label{remark: Local and Global }
For any given modulator $c\in \mathcal{C}$, the joint design of the generator and the overt manipulator results in the same outcomes as the single design of the overt manipulator does.  
\end{remark}


\section{Case Study}
\label{sec:case study}
In Section \ref{sec:case study}, we illustrate how the defender can use the DG to mitigate insider threats where honeypots are configured adaptively to detect and deter misbehavior. 

\subsection{Model Description}
We have $\Theta=\{\theta^b,\theta^g\}$, $\mathcal{X}=\{x^H, x^N\}$, and $\mathcal{A}=\{a_{DO},a_{AC}\}$ 
\textcolor{black}{
based on the running example introduced in Section \ref{sec:motivatingexample}, Example \ref{example:featurepattern}, and Example \ref{exmple:security policies}.} 
The true percentage of honeypots $p_D^{0,H}:=b(x^H)\in [0,1]$, is only known to the SOC. 
Thus, the insiders' perceived honeypot percentage $p_U^{0,H}:=b_U(x^H|\theta)\in [0,1] , \forall \theta\in \Theta$, can be different from the true percentage. 

Table \ref{table:defender} lists the utilities of the SOC and the insiders. 
The column represents the binary state of a node, and the row represents the insiders' actions. 
In each matrix entry, we list the payoffs resulting from the selfish (resp. adversarial) insiders on the left (resp. right) of the semicolon.
When the insider chooses not to access a node, we calibrate the payoffs to be $0$ for both the SOC and the insiders. 
The other four possible scenarios are listed as follows. 
First, a selfish insider's access to a normal server maintains the organization's normal operation and results in a positive reward $r_D>0$ (resp. $r_U>0$) on average to the organization (resp. the selfish insider). 
Second, when an adversarial insider accesses a normal server, he disrupts the normal operation and compromises confidential data, which brings him a reward of $\phi^{N}_U r_U>0$ and incurs a security loss of $\phi^{N}_D r_D<0$ to the organization. 
Third, if an adversarial insider accesses a honeypot, he is detected and prohibited from data theft. Meanwhile, the SOC obtains valuable threat intelligence. 
We use $\phi^{H}_D>0$ and $\phi^{H}_U<0$ to represent the degrees of the SOC's gain and the adversarial insider's loss, respectively. 
Finally,  once a selfish insider accesses the honeypot, the SOC has to quarantine the insider and investigate the incident, which incurs a suspension of normal services as well as an investigation cost. 
Meanwhile, the selfish insider also receives penalties and additional security training sessions. 
We use $\phi^{g}_D r_D<0$ and $\phi^{g}_U  r_U<0$ to represent the cost for the SOC and the selfish insider, respectively. 
\begin{table}[h]
\centering
\begin{tabular}{|c|c|c|c|}
\hline
Selfish $\theta^g$; Adversarial $\theta^b$ & Honeypot $x^H$      & Normal Server $x^N$      \\ \hline
No Access $a_{DO}$                   & $0$ ; $0$        & $0$ ; $0$        \\ \hline
Access $a_{AC}$                  & $r_i\phi^{g}_i$ ; $r_i\phi^{H}_i$ & $r_i$ ; $r_i\phi^{N}_i$ \\ \hline
\end{tabular}
\caption{Two players' utilities $v_i(x,\theta,a), i\in \{D,U\}$. }
\label{table:defender}
\end{table}

Compared to a computing system that precisely follows its instructions, human insiders alter their behaviors in response to (dis)incentives. 
In this case study, the (dis)incentives refer to the insider's authentication cost $c(a_{AC}):=r_U\phi^0$ to access a node, where the ratio $\phi^0\in \mathbb{R}$ takes the value of $0$ in the default setting. 
We assume that the SOC can increase (i.e., $\phi^0<0$) or decrease (i.e., $\phi^0>0$) an insider's authentication cost at no additional cost, i.e., $\gamma=0$. 
The revenues, losses, and costs can be quantified in dollars and their values vary for different security scenarios. 

\subsubsection{Threshold Policy Analysis}
\label{sec:Threshold Policy Analysis}
In this case study, both selfish and adversarial insiders share the same prior belief $p_U^{0,H}\in [0,1]$. 
Hence they share the same posterior belief denoted by $p_U^H\in [0,1]$ and adopt the following threshold policies. 
Define the decision thresholds of the selfish and the adversarial insiders as
$t^{g}(\phi^0):=\max\{\min\{{(1-\phi^0)}/{(1-\phi_U^g)},1\},0\}$
and $t^{b}(\phi^0):=\max\{\min\{{(\phi_U^N-\phi^0)}/{(\phi_U^N-\phi_U^H)},1\},0\}$, respectively. 
Since both denominators are positive, i.e., $1-\phi_U^g>1$ and $\phi_U^N-\phi_U^H>0$, the selfish insider (resp. the adversarial insider) chooses to access a node if and only if the node is unlikely to be a honeypot, i.e., $p_U^H<t^g(\phi^0)$ (resp. $p_U^H<t^b(\phi^0)$). 
If a selfish (resp. adversarial) insider accesses a node, his expected utility $r_U(1-\phi^0+p_D^{0,H}(\phi_U^g-1))$ (resp. $r_U(\phi_U^N-\phi^0+p_D^{0,H}(\phi_U^H-\phi_U^N))$) decreases linearly in $p_D^{0,H}$, i.e., the true percentage of honeypots.  

Since the selfish and adversarial insiders share the same insider information, the difference in their decision thresholds results purely from their incentive misalignment. 
Given the insiders' utility matrices, the SOC can change their incentives 
and elicit desirable behaviors by a proper design of the authentication cost determined by the ratio $\phi^0$. 
If $\phi^0\leq \phi_U^g<0$ (resp. $\phi^0\leq \phi_U^H<0$), then the selfish (resp. adversarial) insider chooses $a_{AC}$ for all security scenarios. 
If $\phi^0\geq 1$ (resp. $\phi^0\geq \phi_U^N>0$), then the selfish (resp. adversarial) insider  chooses $a_{DO}$ for all security scenarios. 
Since the deceptive honeypot configuration can possibly change insiders' behaviors only if $\phi^0$ is in the region  $[\min(\phi_U^g,\phi_U^H),  \max(1,\phi_U^N) ]$, we refer to the region as the \textit{incentivized region} of $\phi^0$. 
As a special case of Proposition \ref{proposition:DivisionRestricts}, Corollary \ref{corollary:impossible} shows that security policies $s_{\{a_{DO},a_{AC}\}}$ and $s_{\{a_{AC},a_{DO}\}}$ cannot be both enforceable for any node in the corporate network. \begin{corollary}
\label{corollary:impossible}
If $\phi_U^g<0$,$\phi_U^N>0,\phi_U^H<0$, 
then for all $\phi^0\in \mathbb{R}$ and credible configuration $\pi\in \Pi$, either $\pi(s_{\{a_{DO},a_{AC}\}}|x)=0, \forall x\in \{x^H,x^N\}$, or $\pi(s_{\{a_{AC},a_{DO}\}}|x)=0, \forall x\in \{x^H,x^N\}$. 
\end{corollary}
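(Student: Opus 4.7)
\medskip

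\noindent\textbf{Proof proposal for Corollary \ref{corollary:impossible}.} The plan is to translate the two incentive-compatibility conditions into constraints on a single posterior belief $p_U^H$ and show that they cannot simultaneously be satisfied with strict inequality. The crucial structural observation is that in this case study both types share the same prior $p_U^{0,H}$, so for every signal $s\in\mathcal{S}$ and every generator $\pi\in\Pi$ the Bayesian update \eqref{eq:BayesUpdate} produces a posterior $p_U^H(s)$ that is common to the selfish and adversarial insiders. Hence it is meaningful to compare the two types' best responses at the same scalar $p_U^H(s)$.

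Next, using the threshold characterization from Section \ref{sec:Threshold Policy Analysis}, I will argue that a signal $s$ with $\pi(s\mid x)>0$ for some $x$ can be labeled as $s_{\{a_{DO},a_{AC}\}}$ only if its induced posterior satisfies $t^g(\phi^0)\le p_U^H(s)\le t^b(\phi^0)$, and it can be labeled as $s_{\{a_{AC},a_{DO}\}}$ only if $t^b(\phi^0)\le p_U^H(s)\le t^g(\phi^0)$. Both requirements come directly from the (IC) inequalities in COP specialized to the binary-action, binary-type setting. For any fixed $\phi^0$ exactly one of the orderings $t^g(\phi^0)<t^b(\phi^0)$, $t^g(\phi^0)>t^b(\phi^0)$, or $t^g(\phi^0)=t^b(\phi^0)$ holds, and each of the first two cases immediately rules out one of the two policies as incentive-compatible, forcing its probability to be zero at every state.

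The only delicate step is the knife-edge case $t^g(\phi^0)=t^b(\phi^0)$, where both regions collapse to the single point $p_U^H=t^g=t^b$ at which both insider types are indifferent between $a_{AC}$ and $a_{DO}$. Here I would invoke the standard tie-breaking convention used in COP, namely that at an indifference posterior the signal can be assigned a single label without loss of generality, so we may still set one of $\pi(s_{\{a_{DO},a_{AC}\}}\mid\cdot)$ or $\pi(s_{\{a_{AC},a_{DO}\}}\mid\cdot)$ to zero throughout $\mathcal{X}$. I expect the main obstacle to be a clean handling of this degenerate case together with the boundary clipping in the definitions of $t^g(\phi^0)$ and $t^b(\phi^0)$; once these are addressed, the corollary follows as an immediate specialization of Proposition \ref{proposition:DivisionRestricts} to $K=M=N=2$ under the utility sign assumptions $\phi_U^g<0$, $\phi_U^N>0$, $\phi_U^H<0$.
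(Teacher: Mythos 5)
Your proposal is correct and follows essentially the same route as the paper, which justifies the corollary by combining the common-posterior observation with the threshold characterization of Section \ref{sec:Threshold Policy Analysis}: the two types' access regions $[0,t^g(\phi^0))$ and $[0,t^b(\phi^0))$ are nested, so the IC intervals for the two separating policies are $[t^b,t^g]$ and $[t^g,t^b]$ and at most one ordering can hold. Two minor remarks: under the paper's type ordering $\Theta=\{\theta^b,\theta^g\}$ your two interval labels are swapped (it is $s_{\{a_{DO},a_{AC}\}}$ that requires $t^b\le p_U^H\le t^g$), which is immaterial to the symmetric disjunctive conclusion, and your explicit treatment of the knife-edge case $t^g(\phi^0)=t^b(\phi^0)$ is actually more careful than the paper, which does not address it.
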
 


\subsection{Numerical Results}
\textcolor{black}{Following the insider categorization in Section \ref{sec:Categorization}, we re-weight the percentage from the VCDB and adopt}
$q^g:=b_D(\theta^g|x)=0.32$ and $q^b:=b_D(\theta^b|x)=0.68$ for all $x\in \{x^N,x^H\}$ as the benchmark value of the insiders' type statistics. 
Based on the analysis in Section \ref{sec:Threshold Policy Analysis}, the values of $r_U$ do not affect the insiders' actions, and the value of $r_D$ only scales the SOC's utility by a constant. Thus, we normalize $r_U=r_D=1$. 
We consider $\phi_U^g=\phi_D^g=-0.3$, $\phi_U^H=-\phi_D^H=-1$, and $\phi_U^N=-\phi_D^N=0.9$ as the benchmark values. 
Then, the selfish insider has the same utility as the SOC, i.e., $v_D(x,\theta^g,a)=v_U(x,\theta^g,a),\forall x\in \{x^H,x^N\}, \forall a\in \{a_{AC},a_{DO}\}$, 
while the adversarial insider has an exactly opposite utility to the one of the SOC, i.e., $v_D(x,\theta^b,a)=-v_U(x,\theta^b,a),\forall x\in \{x^H,x^N\}, \forall a\in \{a_{AC},a_{DO}\}$. 
In Section \ref{sec:without configure}, the SOC cannot change the authentication cost, i.e., $c(a_{AC})=0$. 
In Sections \ref{sec:without configure} and \ref{sec:investigateC}, the insider has the correct prior belief of the honeypot percentage, i.e., $p_U^{0,H}=p_D^{0,H}$. 

\subsubsection{Security Posture under the Optimal Generator} 
\label{sec:without configure}

Fig. \ref{fig:Exante} shows how the SOC's normalized revenue $\tilde{v}_D$ without the optimal generator is affected by the percentages of honeypots and the selfish insiders, respectively. 
The maximum (resp. minimum) value of $\tilde{v}_D$ is achieved when insiders are all selfish (resp. adversarial) and no honeypots are applied. 
The two decision thresholds $t^b(0)$ and $t^g(0)$ divide the percentage of honeypots into three regions, i.e., high, medium, and low, in which the insiders' behaviors and the SOC's normalized revenue $\tilde{v}_D$ have different characteristics. 

If the intended security outcomes are not achieved due to the insiders' misbehavior, the SOC can apply the optimal generator to elicit desirable behaviors and reduce the cyber risks of the organization. 
To illustrate the effectiveness of the optimal generator, we plot the maximum trust margin in Fig. \ref{fig:diff}. 
\begin{figure}[h]
    \centering 
    \begin{subfigure}{0.35\textwidth}
\includegraphics[width=1\columnwidth,height=.8\columnwidth]{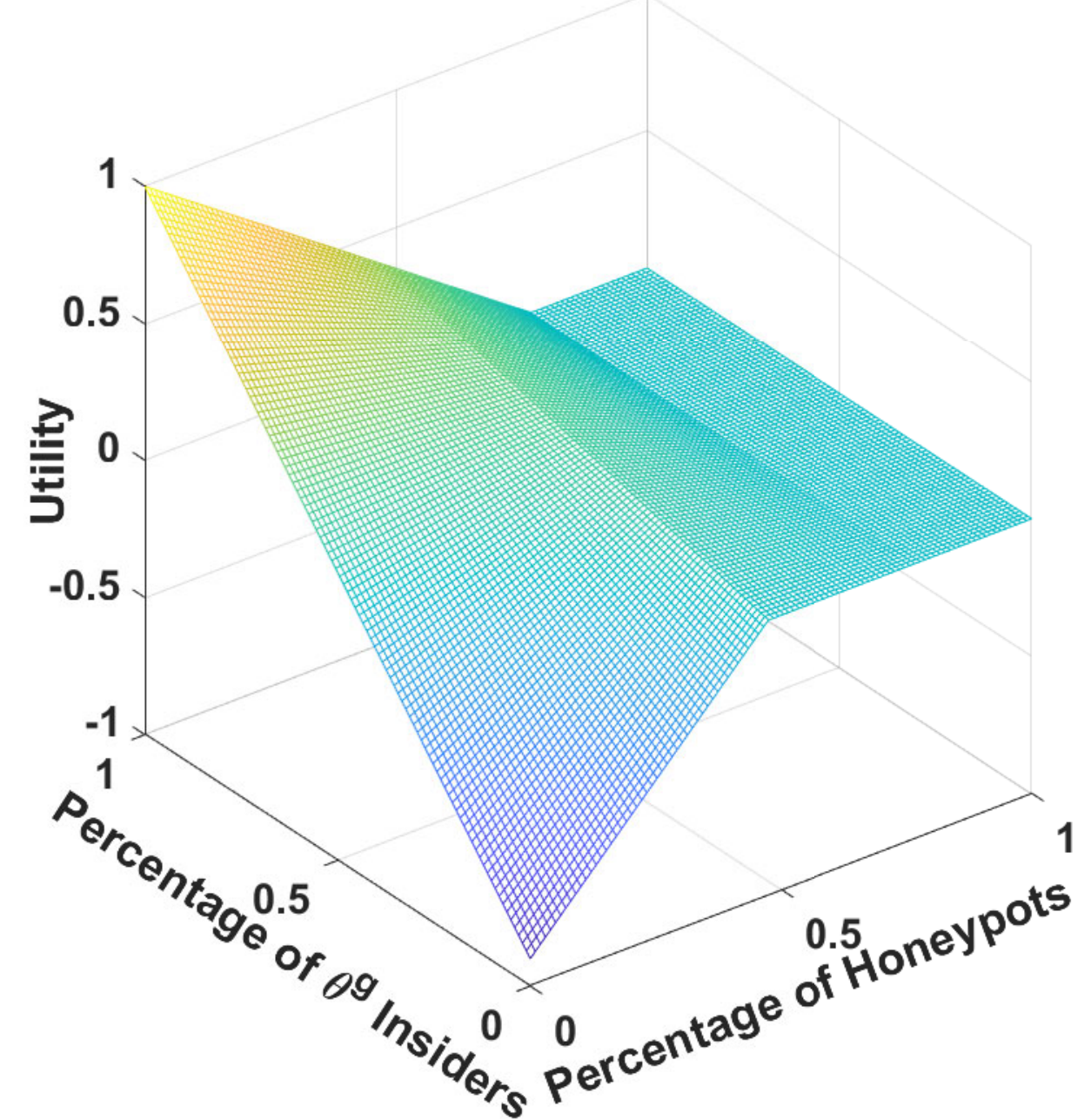}
\caption{ 
Prior utility $\tilde{v}_D$. 
}
\label{fig:Exante}
\end{subfigure}\hfil 
    \begin{subfigure}{0.35\textwidth}
\includegraphics[width=1\columnwidth,height=.8\columnwidth]{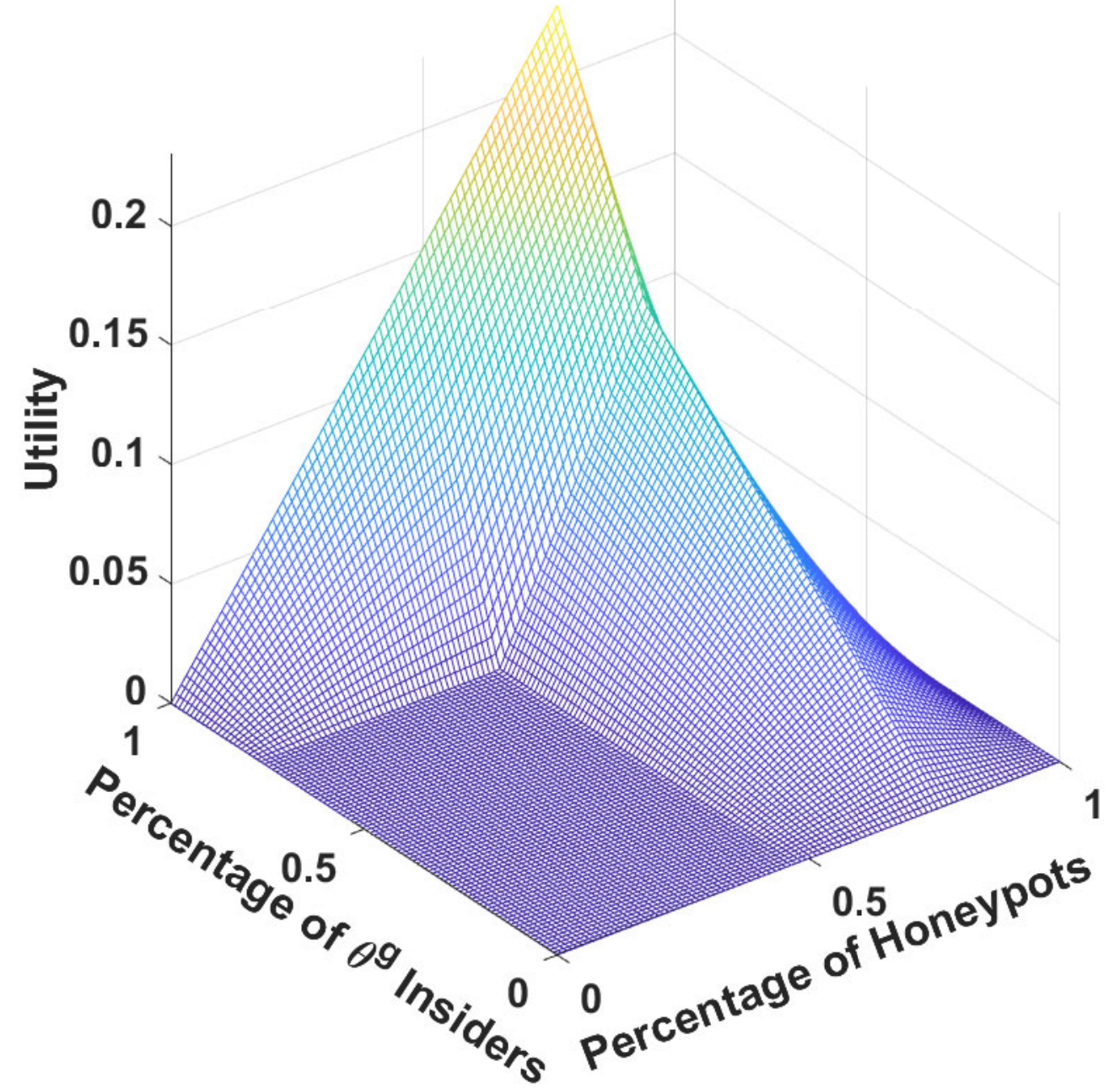} 
\caption{ 
Maximum trust margin. 
}
\label{fig:diff}
\end{subfigure}\hfil 
\caption{
SOC's utilities vs. $p_D^{0,H}\in [0,1]$ and $q^g\in [0,1]$. 
\label{fig:combine}
}
\end{figure}
Fig. \ref{fig:diff} corroborates Theorem \ref{thm:Alignment}; i.e., when all insiders are adversarial (resp. selfish), no (resp. all) credible generators, including the optimal one, can improve the SOC's normalized revenue for any percentage of honeypots $p_D^{0,H}\in [0,1]$. 
The flat region represented by $q^{g}\in [0,{(\phi_D^N-\phi_D^H)}/{(\phi_D^g-1+\phi_D^N-\phi_D^H)}]$ and $p_D^{0,H}\in [0,\min(t^b(0),t^g(0))]$ identifies two critical thresholds. 
On the one hand, we refer to ${(\phi_D^N-\phi_D^H)}/{(\phi_D^g-1+\phi_D^N-\phi_D^H)}$ as the insider's \textit{motive threshold} that is used to quantify the average motive of the entire insider population. 
If the percentage of adversarial insiders exceeds the \textit{motive threshold},  then insiders' behaviors are on average destructive to the organization. 
On the other hand, we refer to $\min(t^b(0),t^g(0))$ as the \textit{deterrence threshold} that measures the adequacy of the honeypots. If the percentage of honeypots is below the \textit{deterrence threshold}, then the SOC does not have a sufficient number of honeypots to create a credible threat for the insiders not to access nodes in the corporate network.  
Based on Definition \ref{def:manageable user}, the insiders are unmanageable in the flat region. 

For the other regions, the insiders are manageable, and the optimal generator can effectively reduce the cyber risk of the organization. 
The increase depends on the percentage of selfish insiders and honeypots.  
When the percentage of honeypots is $t^g(0)$ and insiders are all selfish, the organization's revenue with the optimal generator is $114$ times higher than the one without the optimal generator. 
Averaged over the entire region of $q^g\in [0,1]$ and $p_D^{0,H}\in [0,1]$, the organization's revenue with the optimal generator is $35.6\%$ higher than the one without the optimal generator.  
The results in  Fig. \ref{fig:combine} demonstrate that the optimal generator design provides a constructive way to quantify the \textcolor{black}{accuracy of the} information that the SOC should reveal to the insiders to establish trust with them, while in the meantime, retain her information advantage to elicit desirable insider behaviors and maximize the organization's well-being. 
These results provide a guideline to address the challenges identified in 2c and 2d of Table 2 in \cite{moore2015effective}. 

\subsubsection{Security Posture under Various Modulators}
\label{sec:investigateC}
In Section \ref{sec:investigateC}, we investigate how the (dis)incentives affect the insiders' behaviors and the security posture of the insider network. 
In Fig. \ref{fig:decisonthreshold}, we plot the decision thresholds of selfish and adversarial insiders in blue and red, respectively. 
Since the blue line has a steeper slope than the red line, Fig. \ref{fig:decisonthreshold} demonstrates that the same authentication cost affects the selfish insiders more significantly than the adversarial ones. 
As defined in  Definition \ref{def:separable}, two types of insiders are identifiable under posterior belief $p_U^H$ if $p_U^H\in [t^b(\phi^0),t^g(\phi^0)]$. 
Furthermore, a larger difference in the two thresholds, i.e., $t^g(\phi^0)-t^b(\phi^0)$, indicates a higher incentive misalignment between selfish and adversarial insiders. 
\begin{figure}[h]
\centering
\includegraphics[width=.7\columnwidth]{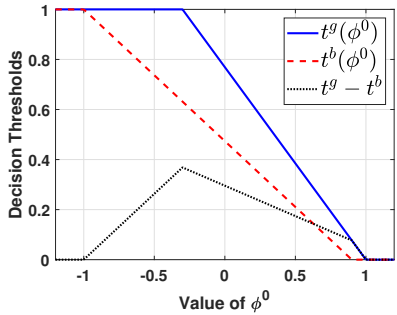} 
\caption{ 
The adversarial and the selfish  insiders' decision thresholds $t^b(\phi^0)$ and $t^g(\phi^0)$
\textcolor{black}{in the red dashed line and the blue solid line}, respectively. 
The difference $t^g(\phi^0)-t^b(\phi^0)$ denoted in the \textcolor{black}{black dotted line} represents their utility misalignment. 
}
\label{fig:decisonthreshold}
\end{figure}

Fig. \ref{fig:priorutility1} illustrates the organization's original payoff $\tilde{v}_D$ without a generator. 
The selfish insider and the SOC achieve a win-win situation at the region $\phi^0\in [0.5,0.74]$ as they both achieve their maximum payoffs at that region. 
The adversarial insider and the SOC cannot achieve a win-win situation for all $\phi^0\in \mathbb{R}$ as adversarial insiders seeking to compromise sensitive data and sabotage the organization have a completely misaligned payoff structure. 
Fig. \ref{fig:posteriorutiliy1} illustrates the organization's improved payoff $V_D$ when the optimal generator is applied. 
The results show that the optimal generator can always increase the payoffs of the selfish insiders and the organization regardless of the (dis)incentives represented by  $\phi^0\in \mathbb{R}$. 
Win-win situations still exist (resp. do not exist) for the SOC and the selfish (resp.  adversarial) insider. 

\begin{figure}[h]
    \centering 
\begin{subfigure}{0.5\columnwidth}
  \includegraphics[width=\columnwidth]{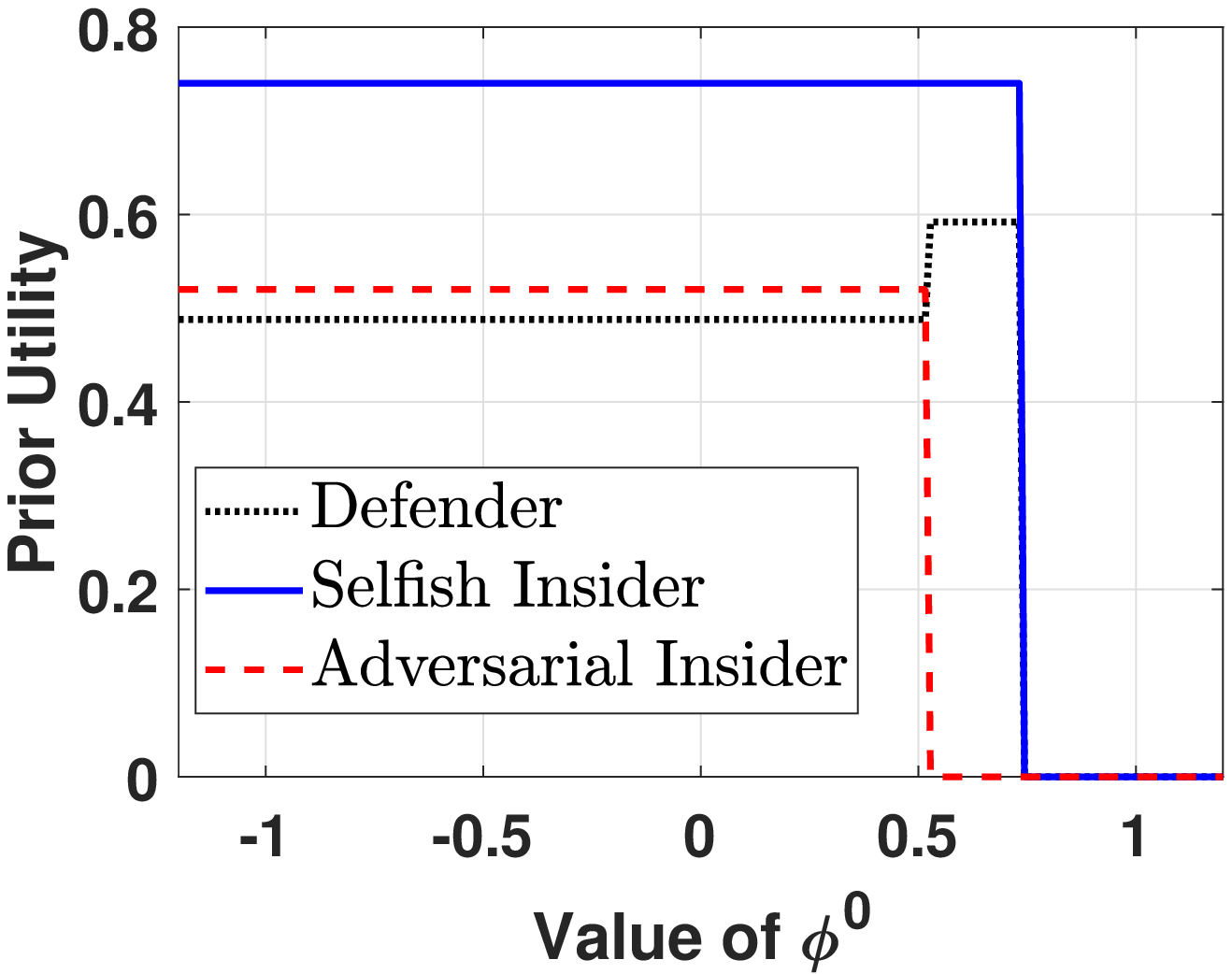}
  \caption{\label{fig:priorutility1} 
   Players' prior utilities. 
  }
\end{subfigure}\hfil 
\begin{subfigure}{0.5\columnwidth}
  \includegraphics[width=\columnwidth]{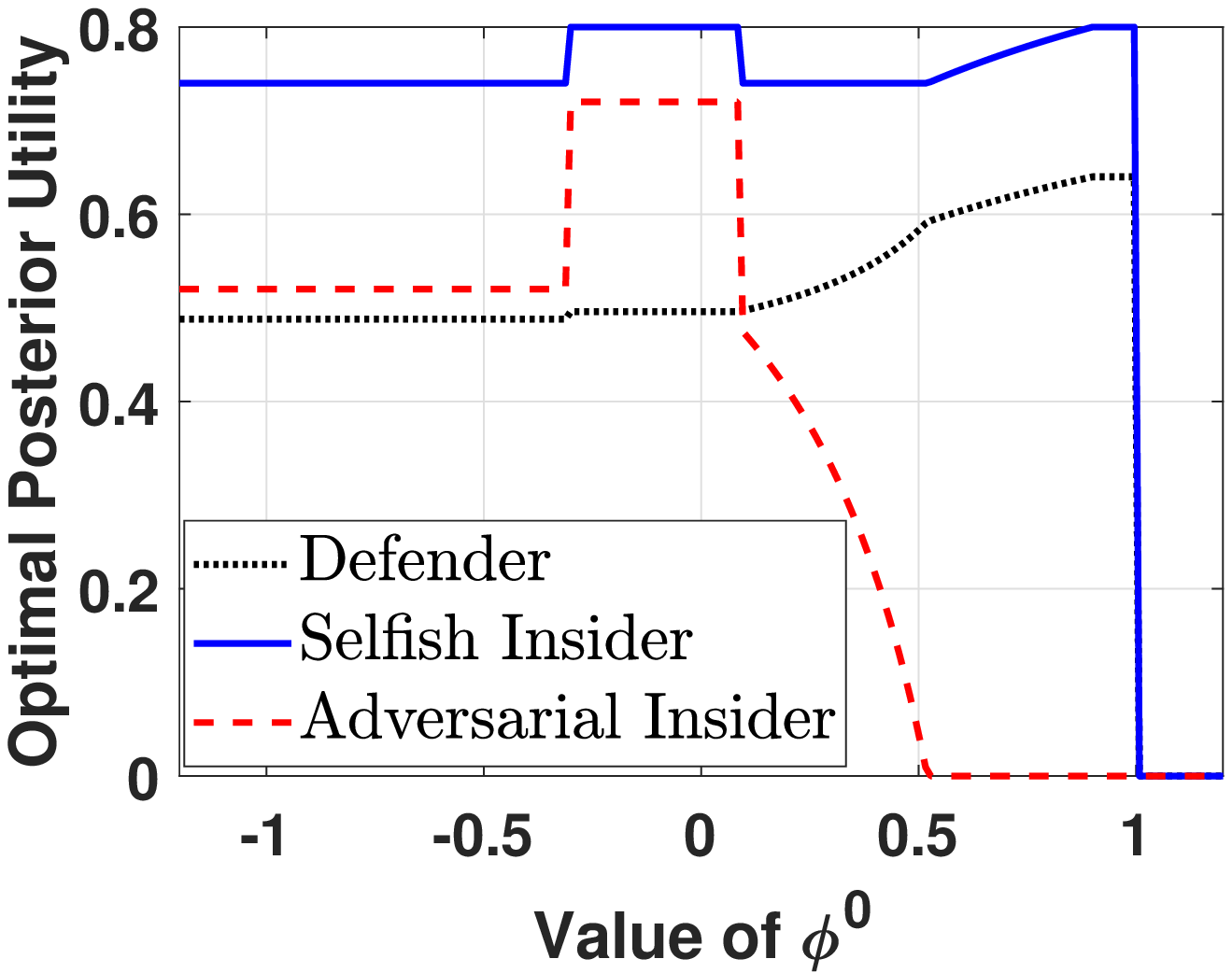}
  \caption{\label{fig:posteriorutiliy1} 
 Optimal posterior utilities. 
 }
\end{subfigure}\hfil 
\caption{
Utilities of the SOC, selfish insiders, and adversarial insiders in \textcolor{black}{the dotted
black, the solid blue, and the dashed red lines}, respectively. 
\label{fig:winwin}
}
\end{figure}

\subsubsection{Security Posture under the Covert and Overt Trust Manipulators}
\label{sec:covertandovert}


In Section \ref{sec:covertandovert}, the SOC can generate ambiguous or fake reports of the honeypot percentage so that the insiders' initial beliefs of the honeypot percentage deviate from the truth, i.e., $p_U^{0,H}\neq p_D^{0,H}$. 
Figs. \ref{fig:Exantemanipulator} and \ref{fig:Expostmanipulator} illustrate the SOC's payoffs with and without the optimal generator, respectively, under different values of $p_U^{0,H}$ and $p_D^{0,H}$. 
In Fig. \ref{fig:Exantemanipulator}, the insiders' initial beliefs fall into the following three regions. 
If $p_U^{0,H}\in [t^g(0),1]$, both types of insiders choose not to access the node. Then, the SOC's normalized payoff $\tilde{v}_D$ is zero regardless of the true percentage of honeypots $p_D^{0,H}$.  
If $p_U^{0,H}\in [t^b(0),t^g(0)]$, selfish insiders choose $a_{AC}$ and adversarial insiders choose $a_{DO}$. Then, reducing the percentage of honeypots increases the SOC's normalized payoff $\tilde{v}_D$ as it reduces the false alarm rate when selfish insiders access the honeypots. 
If $p_U^{0,H}\in [0,t^b(0)]$, both types of insiders choose to access the node. Then, reducing the percentage of honeypots also increases the SOC's normalized payoff $\tilde{v}_D$. However, the increase rate is lower than the one in the second region as the two types of insiders take the same action and are not identifiable. 

\begin{figure}[h]
    \centering 
    \begin{subfigure}{0.35\textwidth}
\includegraphics[width=1\columnwidth,height=.8\columnwidth]{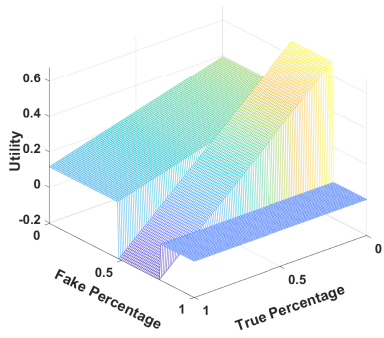}
\caption{ 
Prior utility $\tilde{v}_D$. 
}
\label{fig:Exantemanipulator}
\end{subfigure}\hfil 
    \begin{subfigure}{0.35\textwidth}
\includegraphics[width=1\columnwidth,height=.8\columnwidth]{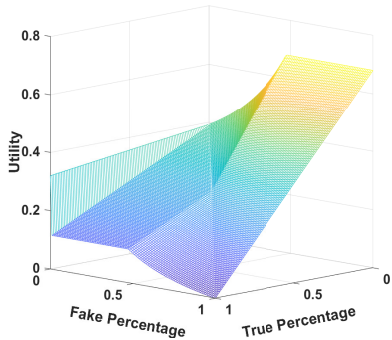} 
\caption{ 
Optimal posterior utility.  
}
\label{fig:Expostmanipulator}
\end{subfigure}\hfil 
\caption{
SOC's utilities vs. $p_D^{0,H}\in [0,1]$ and $p_U^{0,H}\in [0,1]$. 
\label{fig:trustmanipulator}
}
\end{figure}

These results illustrate that without a deceptive generator, the SOC may not always benefit from faking the percentage of honeypots. 
On the contrary, when the optimal generator is applied in Fig. \ref{fig:Expostmanipulator}, the SOC can benefit from a fake percentage of honeypots for all $p_D^{0,H},p_U^{0,H}\in [0,1]$. 
Moreover, the benefit of faking honeypot percentage is a non-decreasing function of $|p_D^{0,H}-p_U^{0,H}|$. 
Thus, the SOC obtains a higher payoff $V_D$ with the optimal generator when there is a larger mismatch between the true and the fake percentages of honeypots. 
The maximum value of $V_D$ is achieved when the true percentage of honeypots is zero and the SOC makes the insiders believe that the percentage of honeypots exceeds $t^b(0)$. 
Averaged over the true percentage $p_D^{0,H}\in [0,1]$ and the fake one $p_U^{0,H}\in [0,1]$, the SOC's payoff with the optimal generator, i.e., $V_D$ is $59.3\%$ higher than her original payoff $\tilde{v}_D$.

\section{Conclusion}
\label{sec:conclusion}
In this work, we have presented a class of duplicity games (DG) to design defensive deception mechanisms for proactive network security. 
The deception mechanism is referred to as the GMM mechanism as it consists of the following three modular design components. 
The generator provides users an appropriate amount of information to procure different types of users to take actions that are favorable to the defender. 
The incentive modulator modifies the users' utilities 
to \textcolor{black}{make their incentives better aligned with the defender's.} 
The trust manipulator makes use of users' trust to impart to them the initial beliefs that can lead to desirable security outcomes. 

We have formulated and analyzed the DG using mathematical programming and graphical approaches. 
It has been shown that the defender requires at most $N$ enforceable security policies from the entire $K^M$ ones to achieve the optimal security posture, which illustrates the efficiency of the GMM mechanism. 
\textcolor{black}{
We have proposed the concept of {\it trust margin} to measure how difficult it is for a defender to elicit the desired behavioral outcome. A user is unmanageable when the maximum trust margin is zero, as no deceptive mechanisms can affect the user's behaviors.}  
We have identified a \textit{separation principle} for the modulator design and an \textit{equivalence principle} that turns the joint design of the generator and manipulator into one single design of the manipulator. 
We have applied the DG to a case study where the defender dynamically configures the honeypot to mitigate insider threats in a corporate network. 
The numerical results have shown that the GMM mechanism manages to elicit desirable actions from both selfish and adversarial insiders and reduce the cyber risk of the organization. 
In particular, the optimal generator itself can increase the defender's payoff by $35.6\%$ on average. 
Equipped with the trust manipulator that fakes the honeypot percentage, the optimal generator can further increase the defender's payoff by $59.3\%$ on average. 
\ifCLASSOPTIONcaptionsoff
  \newpage
\fi



%


\bibliographystyle{IEEEtran}
\bibliography{honeypotDesign}

%




\begin{IEEEbiography}[{\includegraphics[width=1in,height=1.25in,clip,keepaspectratio]{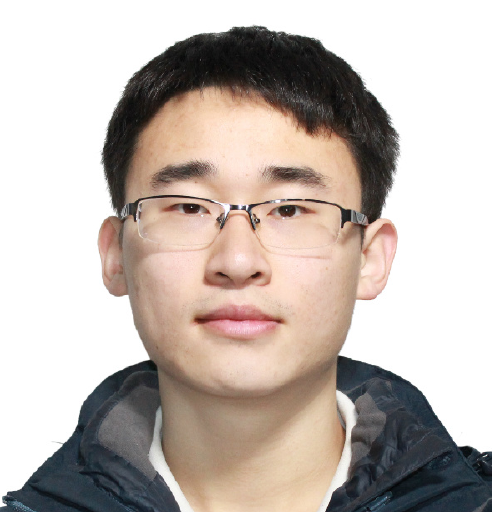}}]{Linan Huang}
(S'16) received the B.Eng. degree (Hons.) in Electrical Engineering from Beijing Institute of Technology, China, in 2016. He is currently pursuing a Ph.D. degree at the Laboratory for Agile and Resilient Complex Systems, Tandon School of Engineering, New York University, NY, USA.
His research interests include dynamic decision-making of the multi-agent system, mechanism design, artificial intelligence, security, and resilience for the cyber-physical systems. 
\end{IEEEbiography}
\begin{IEEEbiography}[{\includegraphics[width=1in,height=1.25in,clip,keepaspectratio]{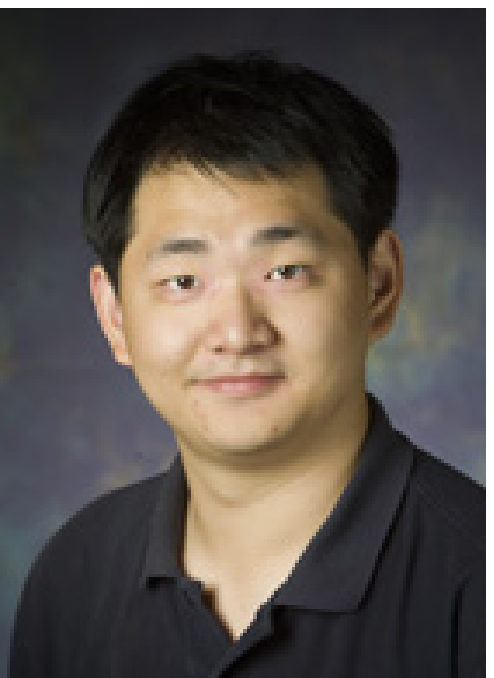}}]{Quanyan Zhu}
(SM’02-M’14) received B. Eng. in Honors Electrical Engineering from McGill University in 2006, M. A. Sc. from the University of Toronto in 2008, and Ph.D. from the University of Illinois at Urbana-Champaign (UIUC) in 2013. 
After stints at Princeton University, he is currently an associate professor at the Department of Electrical and Computer Engineering, New York University (NYU). He is an affiliated faculty member of the Center for Urban Science and Progress (CUSP) and Center for Cyber Security (CCS) at NYU. His current research interests include game theory, machine learning, cyber deception, and cyber-physical systems.
\end{IEEEbiography}

\end{document}